\newcolumntype{P}[1]{>{\centering\arraybackslash}p{#1}}
\newcolumntype{M}[1]{>{\centering\arraybackslash}m{#1}}
\newcommand{\logdet}{\log\det}
\newtheorem{theorem}{Theorem}[section]
\newtheorem{lemma}[theorem]{Lemma}
\theoremstyle{remark}
\theoremstyle{definition}
\begin{document}
\title{Rate of Prefix-free Codes in LQG Control Systems with Side Information} 


\author{
  \IEEEauthorblockN{Travis C. Cuvelier}
  \IEEEauthorblockA{Department of Electrical and Computer Engineering\\
                    The University of Texas at Austin, 
                    Austin, TX USA\\
                   Email: tcuvelier@utexas.edu}
  \and
  \IEEEauthorblockN{Takashi Tanaka}
  \IEEEauthorblockA{Department of Aerospace Engineering and Engineering Mechanics\\
                    The University of Texas at Austin, 
                    Austin, TX USA\\
                    Email: ttanaka@utexas.edu }
}
\maketitle

\begin{abstract}
     In this work, we study an LQG control system where one of two feedback channels is discrete and incurs a communication cost. We assume that a decoder (co-located with the controller) can make noiseless measurements of a subset of the state vector (referred to as \textit{side information}) meanwhile a remote encoder (co-located with a sensor) can make arbitrary measurements of the entire state vector, but must convey its measurements to the decoder over a noiseless binary channel. Use of the channel incurs a communication cost,  
     quantified as the time-averaged expected length of prefix-free binary codeword. We study the tradeoff between the communication cost and control performance. The formulation motivates a constrained directed information minimization problem, which can be solved via convex optimization. Using the optimization, we propose a quantizer design and a subsequent achievability result. 
\end{abstract}
\section{Introduction}
In this work we consider discrete-time MIMO LQG control in a system where some measurements incur a communication cost, but others do not. As in \cite{SDP_DI} and \cite{tanakaISIT}, we study the tradeoff between control performance and communication cost, where the latter is measured in terms of the average length of prefix-free codewords. Our principal motivation is a sensing scenario where an energy constrained remote platform (the encoder) must encode, and then wirelessly transmit, its measurements to a joint fusion center/controller (decoder) which contains some sensors of its own. We model the decoder measurements as noiseless observations of a subset of the state vector indices, which we refer to as \textit{side information} (SI). We consider a setup where both the encoder and decoder have access to the decoder's measurements. In the remote sensing scenario, it may be reasonable to assume that the decoder has sufficient energy to feed its measurements back to the encoder while the sensor platform could be constrained-- under some additional assumptions, minimizing the time-averaged bitrate from the encoder to decoder is a surrogate for minimizing the energy the sensor platform ``spends" on communication.  We
establish a converse bound on the minimum prefix-free codeword length in terms of Massey's directed information (DI) \cite{masseyDI}. The bound applies to the case when the SI is known at both the encoder and decoder, and thus applies when the SI is known at the decoder only. The converse motivates a rate distortion problem where a DI term is minimized subject to a constraint on control performance. The problem is solved optimally via a tractable mathematical program (namely a log-determinant optimization)\cite{oronNew}. We use the optimization to derive an achievability result based on the construction in \cite{tanakaISIT}. 

Massey's DI quantifies the flow of information from one stochastic process to another\cite{masseyDI}. In \cite{silvaFirst}, the time-averaged bitrate of a prefix-free codec inserted into the feedback loop of a SISO control system was shown to be lower bounded by the DI from the plant output to the control input. Also, \cite{silvaFirst} motivated the use of entropy dithered quantization (EDQ) in control systems subject to data rate constraints. Extending \cite{silvaFirst} to the MIMO setting, \cite{SDP_DI} motivated a rate distortion problem that minimized DI in an LQG control system subject to a constraint on performance. Under standard linear/Gaussian plant dynamics, \cite{SDP_DI} showed that any optimal measurement and control policy could be implemented via a three-stage separation architecture; namely a linear/Gaussian sensor, a Kalman filter, a certainty equivalence linear feedback controller. The optimization to find the minimum DI (and the minimizing policy) was formulated as a semidefinite (log-determinant) program \cite{SDP_DI}. \cite{tanakaISIT} gave operational significance to the minimal DI (and minimizing policy) in \cite{SDP_DI}. In \cite{tanakaISIT}, it was shown that a zero-delay source coding scheme, based on quantizing Kalman filter innovations via EDQ, followed by prefix-free coding achieves a DI cost within $\frac{n}{2}\log(\frac{\pi e}{3})+2$ bits of the minimal DI cost in \cite{SDP_DI}. Likewise, \cite{kostinaTradeoff} studied the tradeoff between DI and LQG performance, proved converse bounds applying to plants with non-Gaussian disturbances, and demonstrated achievability without dithering. 

The impact of SI (modeled as a decoder-side linear observation of the state vector in additive Gaussian noise) on the tradeoff between DI and LQG performance in LTI SISO systems was investigated in \cite{kostinaSI}. It was argued that it suffices to consider a rate distortion problem in a related tracking problem and that linear/Gaussian policies were optimal. In \cite{photisSI}, an optimization problem was formulated to analyze the minimum attainable DI in a MIMO LQG control system with SI assuming linear feedback policies. Very recently, \cite{oronNew} proved that linear/Gaussian policies conforming to the ``three-stage separation" architecture of \cite{SDP_DI} achieve optimal performance in a MIMO time-varying generalization of the original control problem posed in \cite{kostinaSI}. It is also argued that it suffices to consider time-invariant policies in the time-invariant infinite horizon setting\cite{oronNew}. In \cite{kostinaSI}, \cite{photisSI}, and \cite{oronNew}, SI at the encoder does not impact the rate-distortion tradeoff.

In this work, our system model differs slightly from that in \cite{photisSI} and is slightly less general than that in \cite{oronNew}. Our perspective is quite different. We motivate our rate distortion problem, and demonstrate an achievability result, in terms of \textit{digital} communications. The achievability approaches in \cite{photisSI} and \cite{oronNew} are \textit{analog} in the sense that the feedback channel is continuous. Our contributions are summarized as follows:
\begin{enumerate}
    \item Assuming that the feedback channel from the encoder to decoder is binary and noiseless, we derive a lower bound on the minimum expected prefix-free codeword length under a constraint on control performance. The converse result motivates a rate distortion formulation. 
    \item Via the three-stage separation principle (cf. \cite[(19)]{oronNew}),  we derive a semidefinite program equivalent to the  rate distortion problem. \footnote{When we originally submitted this manuscript, we proposed three-stage separation as a conjecture. After submitting, we became aware that it was shown to optimal in the commensurately published \cite{oronNew}. We derived our SDP formulation independently, and provide additional system theoretic commentary with respect to \cite{oronNew}.} 
    \item We provide a recipe to design both a sensor and a quantizer that nearly achieves the performance of the rate distortion formulation. Namely, we specify both a zero-delay quantizer design and source coding protocol.
\end{enumerate}A version of this paper with appendices is provided in \cite{our_appendix}. 

\textbf{Notation}: 
We denote scalars by lower case letters $s$, vectors by boldface lower-case letters $\mathbf{v}$, and matrices by boldface capitols $\mathbf{M}$. $\mathbf{M}^\mathrm{T}$ denotes transpose. We use $x_{1:t}$  to denote the sequence $(x_1,x_2,\dots, x_t)$, and $\{x_{t}\}$ for $x_{1:\infty}$. We define the ``time shifted" sequence $\mathbf{x}^+_{1:t} = (0, \mathbf{x}_{1},\dots,\mathbf{x}_{t-1})$. If ${t}<1$, $\mathbf{x}_{1:t} = \emptyset$. Denote the set of finite length binary strings $\{0,1\}^*$. Denote the entropy of a discrete random variable (RV) $H$, differential entropy by $h$, and mutual information (MI) by $I$. Denote causally conditioned DI 
\begin{align}\label{eq:DIDEF}
    I(\mathbf{p}_{1:T}\rightarrow\mathbf{q}_{1:T}|| \mathbf{r}_{1:T}) = \sum\nolimits_{t=1}^{T}  I(\mathbf{p}_{1:t};\mathbf{q}_{t} | \mathbf{q}_{1:t-1}, \mathbf{r}_{1:t}).
\end{align} If $A,B,C$ are RVs and $A$ is independent of $C$ given $B$ we say that $A$, $B$, $C$ form a Markov chain and write $A\leftrightarrow B \leftrightarrow C$. 
\section{System model and problem formulation}\label{sec:formulation}
 Fig. \ref{fig:flowchart} illustrates our assumed system model. We assume a MIMO plant, a generally randomized sensor/encoder, and two feedback channels (one for SI and one for prefix-free codewords) from the encoder to a possibly randomized decoder/controller. Let $\mathbf{x}^1_t\in \mathbb{R}^{n}$ and $\mathbf{x}^2_t\in \mathbb{R}^{m}$. The state vector is defined as $\mathbf{x}_t = [(\mathbf{x}^1_t)^{\mathrm{T}},(\mathbf{x}^2_t)^{\mathrm{T}}]^{\mathrm{T}}$. Let $\mathbf{A}_{11}\in \mathbb{R}^{n\times n}$, $\mathbf{A}_{12}\in \mathbb{R}^{n\times m}$, $\mathbf{A}_{21}\in \mathbb{R}^{m\times n}$, and $\mathbf{A}_{22}\in \mathbb{R}^{m\times m}$ be block partitions of the system matrix $\mathbf{A}$, and define $\mathbf{W}_{11}\in \mathbb{R}^{n\times n}$, and  $\mathbf{W}_{22}\in \mathbb{R}^{m\times m}$. The plant dynamics are given by
\begin{subequations}
\begin{align}\label{eq:dynamics}
   \begin{bmatrix}
    \mathbf{x}^1_{t+1}\\\mathbf{x}^2_{t+1}
    \end{bmatrix} = \begin{bmatrix} \mathbf{A}_{11} &  \mathbf{A}_{12} \\ \mathbf{A}_{21} & \mathbf{A}_{22}\end{bmatrix}\begin{bmatrix}
    \mathbf{x}^1_{t}\\\mathbf{x}^2_{t}
    \end{bmatrix}+ \mathbf{B}\mathbf{u}_t+\mathbf{w}_t\text{, where }
\end{align} 
\begin{align}
\mathbf{w}_t\sim\mathcal{N}(\mathbf{0},\mathbf{W})\text{ and } \mathbf{W}=\begin{bmatrix} \mathbf{W}_{11} &  \mathbf{0} \\ \mathbf{0} & \mathbf{W}_{22}\end{bmatrix}.
\end{align}
\end{subequations}
We assume $\mathbf{W}_{11},\mathbf{W}_{22}\succ 0$ and the $\mathbf{w}_t$ are IID. We assume $\mathbf{B}\in\mathbb{R}^{n+m\times u}$ and that $(\mathbf{A},\mathbf{B})$ is stabilizable. \begin{figure}[]
	\centering
	\includegraphics[scale = .22]{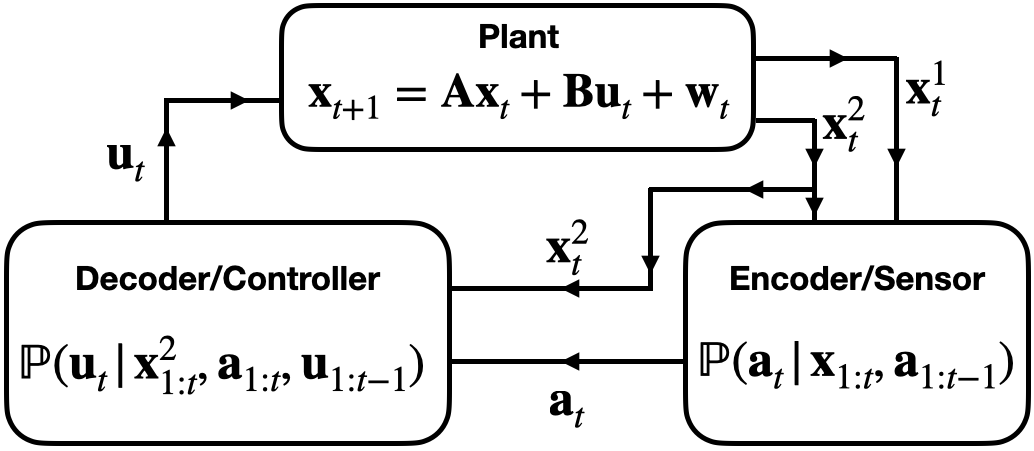}
    \vspace{-.4cm}
	\caption{The encoder has access to $\mathbf{x}^1_{t}$ and $\mathbf{x}^2_{t}$, while the decoder can access $\mathbf{x}^2_{t}$, only.  At every time $t$, the encoder transmits a prefix-free codeword $\mathbf{a}_{t}\in \{0,1\}^*$ to the controller. As in \cite{SDP_DI}\cite{tanakaISIT}, the length of the codeword provides a notion of communication cost. Intuitively, the decoder relies on a discrete channel to convey any knowledge of $\{\mathbf{x}^1_{t}\}$ not contained in  $\{\mathbf{x}^2_{t}\}$ to the decoder. The decoder generates the control input $\mathbf{u}_{t}$}
	\label{fig:flowchart}
\end{figure} The sensor/encoder policy is a sequence of causally conditioned stochastic kernels denoted $
    \mathbb{P}(\mathbf{a}_{1:\infty}|| \mathbf{x}_{1:\infty})=\{\mathbb{P}(\mathbf{a}_t| \mathbf{x}_{1:t},\mathbf{a}_{1:t-1})\}_{t=1,\dots}$,  the decoder/controller policy is defined analogously and denoted $
    \mathbb{P}(\mathbf{u}_{1:\infty}|| \mathbf{a}_{1:\infty},\mathbf{x}^2_{1:\infty})=  \{\mathbb{P}(\mathbf{u}_t|\mathbf{a}_{1:t},\mathbf{x}^2_{1:t},\mathbf{u}_{1:t-1})\}_{t=1,2,\dots}$.

Let $\ell(\mathbf{a}_{t})$ be the length of the codeword $\mathbf{a}_{t}\in \{0,1\}^*$ (in bits). We seek policies  that minimize the time averaged expected codeword length subject to a constraint on control performance. Following  from \cite{tanakaISIT}, we pursue the optimization:
\begin{equation}
\begin{aligned}
& \underset{\substack{\mathbb{P}(\mathbf{a}_{1:{\infty}}||\mathbf{x}_{1:\infty}) \\ \mathbb{P}(\mathbf{u}_{1:\infty}|| \mathbf{a}_{1:\infty},\mathbf{x}^2_{1:\infty})}}{\inf}  \underset{T\rightarrow \infty}{\lim\sup}\text{ }\frac{1}{T}\sum\nolimits_{t=1}^{T}\mathbb{E}[\ell(\mathbf{a}_{t})] \\ &\text{s.t. }   \underset{T\rightarrow \infty}{\lim\sup}\frac{1}{T}\sum\nolimits_{t=1}^{T}\mathbb{E}[\lVert \mathbf{x}_{t+1} \rVert_{\mathbf{Q}}^{2} +\lVert \mathbf{u}_{t} \rVert_{\mathbf{R}}^{2}] \le \gamma
\end{aligned} 
\end{equation} where $\mathbf{Q}\succ 0$, $\mathbf{R}\succ 0$. The expectations are taken with respect to the joint measure induced by the policies and plant dynamics. 

\section{Converse}\label{sec:converse}
The converse follows from \cite{tanakaISIT} and \cite{SDP_DI}. 
\begin{theorem}[A converse proof]\label{theorem:converse}
Consider the model in Fig. \ref{fig:flowchart}. Let $\ell(\mathbf{a}_{t})$ be the length of the codeword $\mathbf{a}_t$ in bits. For any (possibly randomized) control and encoding/decoding policies, we have
\begin{align}\label{eq:converse_result}
    \sum\nolimits_{t=1}^{T}\mathbb{E}[\ell(\mathbf{a}_{t})]\ge I(\mathbf{x}_{1:T} \rightarrow \mathbf{a}_{1:T}||\mathbf{x}^2_{1:T}).
\end{align}
\end{theorem}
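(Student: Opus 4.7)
\bigskip

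\noindent\textbf{Proof proposal.} The plan is to reduce the statement to a per-time-step conditional source coding inequality and then telescope using the definition of causally conditioned directed information in \eqref{eq:DIDEF}. The key observation is that, at time $t$, the pair $(\mathbf{a}_{1:t-1}, \mathbf{x}^2_{1:t})$ is common knowledge between the encoder and decoder, so it can serve as ``side information'' in a conditional Kraft argument.

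First, fix a realization $(\mathbf{a}_{1:t-1}, \mathbf{x}^2_{1:t}) = (a_{1:t-1}, x^2_{1:t})$. Any valid encoder--decoder pair yields, for each such realization, a prefix-free codebook on the alphabet of $\mathbf{a}_t$; the decoder is able to parse $\mathbf{a}_t$ because it knows which codebook is in use. Applying Kraft's inequality to this codebook together with the elementary inequality $\mathbb{E}[\ell(\mathbf{a}_t)\mid \text{side info}]\ge H(\mathbf{a}_t\mid \text{side info})$ and then averaging over the side information yields
\begin{align*}
\mathbb{E}[\ell(\mathbf{a}_{t})]\ \ge\ H\!\left(\mathbf{a}_{t}\,\big|\,\mathbf{a}_{1:t-1},\mathbf{x}^2_{1:t}\right).
\end{align*}
This step is valid even if the encoder policy is randomized and even if the codebook is not adaptive, since in the non-adaptive case the unconditional entropy $H(\mathbf{a}_t)$ only strengthens the bound.

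Next, I would lower bound the conditional entropy by the conditional mutual information. Using the identity
\begin{align*}
I(\mathbf{x}_{1:t};\mathbf{a}_{t}\mid\mathbf{a}_{1:t-1},\mathbf{x}^2_{1:t})
= H(\mathbf{a}_{t}\mid\mathbf{a}_{1:t-1},\mathbf{x}^2_{1:t}) - H(\mathbf{a}_{t}\mid\mathbf{x}_{1:t},\mathbf{a}_{1:t-1},\mathbf{x}^2_{1:t}),
\end{align*}
and the non-negativity of the discrete conditional entropy (note that $\mathbf{x}^2_{1:t}$ is a deterministic function of $\mathbf{x}_{1:t}$, so the second term is a genuine discrete entropy), I conclude
\begin{align*}
H(\mathbf{a}_{t}\mid\mathbf{a}_{1:t-1},\mathbf{x}^2_{1:t})\ \ge\ I(\mathbf{x}_{1:t};\mathbf{a}_{t}\mid\mathbf{a}_{1:t-1},\mathbf{x}^2_{1:t}).
\end{align*}

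Finally, summing from $t=1$ to $T$ and invoking the definition \eqref{eq:DIDEF} of causally conditioned directed information gives the claimed bound \eqref{eq:converse_result}. I expect the only delicate point to be the justification that the conditional Kraft bound applies uniformly even when the encoder uses private randomness and when the codebook at time $t$ may itself be a (randomized) function of $(\mathbf{a}_{1:t-1},\mathbf{x}^2_{1:t})$; this is handled by conditioning first on the entire encoder--decoder randomization and on the realized common knowledge, applying Kraft pathwise, and then taking expectations. No other step requires the Gaussian structure or LQG cost, so the converse is an information-theoretic statement about variable-length coding with two-sided side information $\mathbf{x}^2_{1:t}$.
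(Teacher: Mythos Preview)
Your proposal is correct and follows essentially the same three-step structure as the paper: a Kraft-type lower bound on $\mathbb{E}[\ell(\mathbf{a}_t)]$ by (conditional) entropy, a drop from conditional entropy to conditional mutual information via nonnegativity of $H(\mathbf{a}_t\mid \mathbf{x}_{1:t},\mathbf{a}_{1:t-1})$, and a telescoping sum using \eqref{eq:DIDEF}. The only cosmetic difference is in the first step: the paper applies the \emph{unconditional} Kraft bound $\mathbb{E}[\ell(\mathbf{a}_t)]\ge H(\mathbf{a}_t)$ (since the marginal support $\mathcal{A}_t$ is prefix-free by assumption) and then invokes ``conditioning reduces entropy'' to reach $H(\mathbf{a}_t\mid \mathbf{a}_{1:t-1},\mathbf{x}^2_{1:t})$, whereas you phrase it as a conditional Kraft argument indexed by the common knowledge $(\mathbf{a}_{1:t-1},\mathbf{x}^2_{1:t})$; you already note that the non-adaptive case recovers exactly the paper's route, so the two are equivalent here.
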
 
\begin{proof}
The model assumes that $\mathbf{a}_{t}$ is a codeword from a prefix-free code. Let $\mathcal{A}_{t} = \{\mathsf{a}\in \{0,1\}^* : \mathbb{P}(\mathbf{a}_{t}=\mathsf{a})>0\}$. At every time $t$,   if $\mathsf{a}_{1},\mathsf{a}_{2}\in \mathcal{A}_{t}$ the prefix-free assumption guarantees that  $\mathsf{a}_{1}$ is not a prefix of $\mathsf{a}_{2}$ and vice-versa. We claim that 
\begin{align}\label{eq:trivialNontrivial}
 \mathbb{E}[\ell(\mathbf{a}_{t})] \ge H(\mathbf{a}_{t}),
\end{align} this follows from a claim that for every $t$, any function $C_{t}: \{0,1\}^*\rightarrow \{0,1\}^*$  satisfying
$\mathsf{a}=C_{t}(\mathsf{a})$ for all $\mathsf{a}\in\mathcal{A}_{t}$ is a prefix-free code (in the terminology of \cite[Ch. 5]{elemIT} ) from $\mathcal{A}_{t}$ to $\{0,1\}^*$. For any prefix-free code $C^*_{t}$ (cf. \cite[Theorem 5.3.1]{elemIT}) 
\begin{align}
    \mathbb{E}[\ell(C^*_{t}(\mathbf{a}_{t}))] \ge H(\mathbf{a}_{t}). 
\end{align} Since $C_{t}$ is identity on $\mathcal{A}_{t}$, we have $\mathbb{E}[\ell(C_{t}(\mathbf{a}_{t}))] = \mathbb{E}[\ell(\mathbf{a}_{t})]$, and (\ref{eq:trivialNontrivial}) follows. We discuss (\ref{eq:trivialNontrivial}) in \cite[Appendix \ref{app:SIEO}]{our_appendix}. 

At every time $t$ we have the following chain of inequalities
\begin{IEEEeqnarray}{rCl} \mathbb{E}[\ell(\mathbf{a}_{t})] &\ge& H(\mathbf{a}_{t})\label{eq:trivialNontrivial2} \\ &\ge&H(\mathbf{a}_t | \mathbf{a}_{1:t-1}, \mathbf{x}^2_{1:t}) \label{eq:conditioningReducesEntropy} \\ &\ge& H(\mathbf{a}_t | \mathbf{a}_{1:t-1}, \mathbf{x}^2_{1:t})-  H(\mathbf{a}_t | \mathbf{a}_{1:t-1},  \mathbf{x}_{1:t})\label{eq:vanillaEntropyPositive},
\end{IEEEeqnarray} Note that (\ref{eq:trivialNontrivial2}) is precisely (\ref{eq:trivialNontrivial}), (\ref{eq:conditioningReducesEntropy}) follows since conditioning reduces entropy, and (\ref{eq:vanillaEntropyPositive}) follows since discrete entropy is positive. The right hand side of (\ref{eq:vanillaEntropyPositive}) is  $I(\mathbf{a}_t;\mathbf{x}_{1:t}|\mathbf{a}_{1:t-1},\mathbf{x}^2_{1:t})$. 
Summing over $t$, and applying (\ref{eq:DIDEF}) gives (\ref{eq:converse_result}). 
\end{proof}
\section{Rate Distortion Formulation}
Given the converse in Sec. \ref{sec:converse}, the arguments in  \cite{SDP_DI} and \cite{tanakaISIT} suggest attempting  the following optimization 
\begin{equation}\label{eq:desired}
\begin{aligned}
& \underset{\substack{\mathbb{P}(\mathbf{a}_{1:\infty}|| \mathbf{x}_{1:\infty}) \\ \mathbb{P}(\mathbf{u}_{1:\infty}|| \mathbf{a}_{1:\infty},\mathbf{x}^2_{1:\infty})}}{\inf}  \underset{T\rightarrow \infty}{\lim\sup}\text{ }\frac{1}{T} I(\mathbf{x}_{1:T} \rightarrow \mathbf{a}_{1:T}||\mathbf{x}^2_{1:T})  \\ &\text{s.t. }   \underset{T\rightarrow \infty}{\lim\sup}\frac{1}{T}\sum\nolimits_{t=1}^{T}\mathbb{E}[\lVert \mathbf{x}_{t+1} \rVert_{\mathbf{Q}}^{2} +\lVert \mathbf{u}_{t} \rVert_{\mathbf{R}}^{2}] \le \gamma,
\end{aligned} 
\end{equation} where the infimum is over all possible encoder and decoder policies and all expectations are computed under the measure induced by the policies and the plant dynamics. Let $\{\mathbf{y}_{t}\}$ denote a sequence of (not necessarily discrete) random variables. Define the set of causally conditioned kernels $\mathbb{P}(\mathbf{y}_{1:\infty}|| \mathbf{x}_{1:\infty})=\{\mathbb{P}(\mathbf{y}_t| \mathbf{x}_{1:t},\mathbf{y}_{1:t-1})\}_{t=1,\dots}$ and  $
    \mathbb{P}(\mathbf{u}_{1:\infty}|| \mathbf{y}_{1:\infty},\mathbf{x}^2_{1:\infty})=  \{\mathbb{P}(\mathbf{u}_t|\mathbf{y}_{1:t},\mathbf{x}^2_{1:t},\mathbf{u}_{1:t-1})\}_{t=1,2,\dots}$. The infimum in (\ref{eq:desired}) is lower bounded by 
\begin{equation}\label{eq:desired_continuous}
\begin{aligned}
& \underset{\substack{\mathbb{P}(\mathbf{y}_{1:\infty}|| \mathbf{x}_{1:\infty}) \\ \mathbb{P}(\mathbf{u}_{1:\infty}|| \mathbf{y}_{1:\infty},\mathbf{x}^2_{1:\infty})}}{\inf}  \underset{T\rightarrow \infty}{\lim\sup}\text{ }\frac{1}{T} I(\mathbf{x}_{1:T} \rightarrow \mathbf{y}_{1:T}||\mathbf{x}^2_{1:T})  \\ &\text{s.t. }   \underset{T\rightarrow \infty}{\lim\sup}\frac{1}{T}\sum\nolimits_{t=1}^{T}\mathbb{E}[\lVert \mathbf{x}_{t+1} \rVert_{\mathbf{Q}}^{2} +\lVert \mathbf{u}_{t} \rVert_{\mathbf{R}}^{2}] \le \gamma.
\end{aligned} 
\end{equation} That (\ref{eq:desired_continuous}) lower bounds (\ref{eq:desired}) follows from expanding the domain of minimization. In (\ref{eq:desired}) the optimization is restricted to kernels where $\mathbf{a}_{t}$ is a discrete codeword, whereas in (\ref{eq:desired_continuous}) we make no such assumption. 

Note that (\ref{eq:desired_continuous}) is an optimization over an infinite dimensional policy space and is not computationally amenable. Recently, \cite{oronNew} demonstrated that the minimum in (\ref{eq:desired_continuous}) is achievable by a time invariant linear/Gaussian policy conforming to the three-stage separation architecture depicted in Fig. \ref{fig:conjecture}; namely, the optimal policy consists of a time-invariant linear-Gaussian sensor, a Kalman filter, and a certainty equivalence controller. Such a structural result allows us to convert (\ref{eq:desired_continuous}) into an equivalent finite dimensional optimization. We discuss the optimal architecture in the following subsection. 
\subsection{Three stage test channel (cf. \cite{SDP_DI}, \cite{oronNew})}
The feedback loop contains three components: 
\begin{figure}[]
	\centering
	\includegraphics[scale = .23]{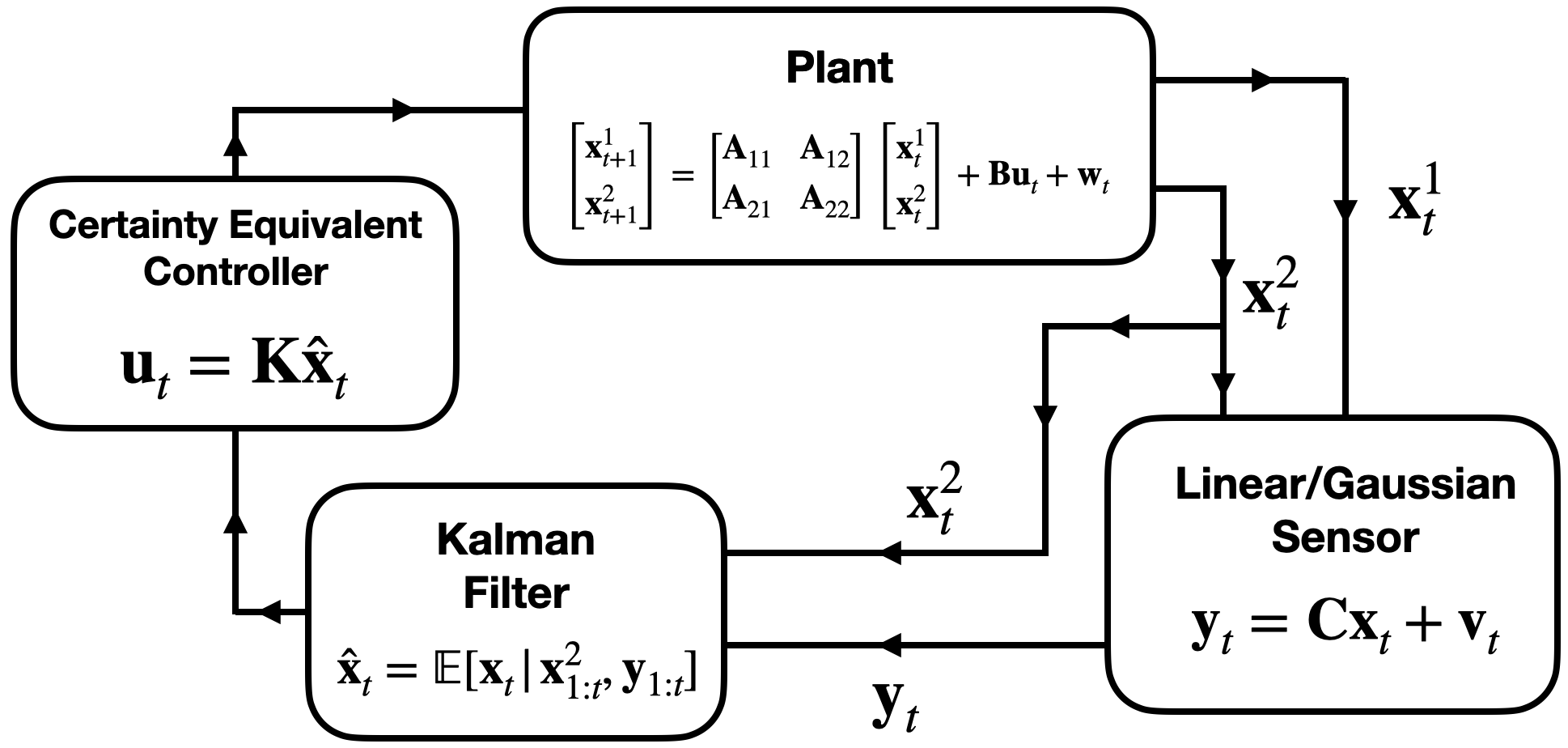}
    \vspace{-.4cm}
	\caption{The three-stage separation architecture.}
	\label{fig:conjecture}
\end{figure} 
\par{  I. \textbf{ Time-invariant linear/Gaussian sensor:} }
Let $\mathbf{C}_{1}\in \mathbb{R}^{n\times n}$ and $\mathbf{C}_{2}\in \mathbb{R}^{n\times m}$.The equation governing the sensor output, $\mathbf{y}_t$, is assumed to be
\begin{align}\label{eq:sensorModel}
    \mathbf{y}_t =  \begin{bmatrix}  \mathbf{C}_{1} &  \mathbf{C}_{2} \end{bmatrix}\mathbf{x}_t+\mathbf{v}_t,\text{ where }\mathbf{v}_t\sim\mathcal{N}(\mathbf{0},\mathbf{V}). \end{align}
\par{  II. \textbf{ Kalman filter:} }The standard Kalman filter (KF) computes the linear minimum mean squared error (LMMSE) estimator, which in the joint Gaussian case is also the MMSE estimator. The estimator is computed by the standard recursion  (cf. \cite{kailathBk}). The KF computes the estimate $\mathbf{\hat{x}}_{t}$ via a linear (in all arguments), time varying, $\mathbf{C}$ and $\mathbf{V}$ dependent recursion denoted $\mathbf{\hat{x}}_{t} = \Psi_t( \mathbf{\hat{x}}_{t-1}, \mathbf{y}_{t},  \mathbf{x}^2_{t},\mathbf{u}_{t})$. 
\par{III. \textbf{ Certainty equivalence control:} }We assume certainty equivalence linear feedback control. Let $\mathbf{S}$ be a stabilizing solution to the algebraic Riccati equation\cite{tanakaISIT}
    \begin{align}
       \mathbf{S} = \mathbf{A}^{\mathrm{T}}\mathbf{S}\mathbf{A}-\mathbf{A}^{\mathrm{T}}\mathbf{S}\mathbf{B}(\mathbf{B}^{\mathrm{T}}\mathbf{S}\mathbf{B}+\mathbf{R})^{-1}\mathbf{B}^{\mathrm{T}}\mathbf{S}\mathbf{A}+\mathbf{Q}.
    \end{align} The feedback control gain $\mathbf{K}$ is then given by
    \begin{align}\label{eq:kdef}
        \mathbf{K} = -(\mathbf{B}^{\mathrm{T}}\mathbf{S}\mathbf{B}+\mathbf{R})^{-1}\mathbf{B}^{\mathrm{T}}\mathbf{S}\mathbf{A}.
    \end{align} Under the three-stage test channel assumption, the design variables are limited to $\mathbf{C}$ and $\mathbf{V}\succeq 0$, converting (\ref{eq:desired_continuous}) into a finite-dimensional optimization. 
\section{A convex programming approach to the rate/control performance tradeoff}\label{sec:cvx}
Via three-stage separation, the minimum in  (\ref{eq:desired_continuous}) is given by 
\begin{subequations}\label{eq:almost_tractable}
\begin{align}
\underset{\mathbf{C}, \mathbf{V}}{\inf}   \quad & \underset{T\rightarrow \infty}{\lim\sup}\text{ } \frac{1}{T}I(\mathbf{x}_{1:T}\rightarrow\mathbf{y}_{1:T}|| \mathbf{x}^2_{1:T})  \label{eq:communicationCost} \\
\quad \text{s.t. }\forall\text{ }t &\quad  \underset{T\rightarrow \infty}{\lim\sup}\text{ }\frac{1}{T}\sum\nolimits_{t=1}^{T}\mathbb{E}[\lVert \mathbf{x}_{t+1} \rVert_{\mathbf{Q}}^{2}+\lVert \mathbf{u}_{t} \rVert_{\mathbf{R}}^{2}] \le \gamma, \label{eq:controlCost} 
\\ \quad &\quad  \mathbf{x}_{t+1} = \mathbf{A}\mathbf{x}_{t}+\mathbf{B}\mathbf{K}\mathbf{u}_{t}+\mathbf{w}_{t},\nonumber \\ \quad &\quad 
 \mathbf{y}_{t}=\mathbf{C}\mathbf{x}_{t}+\mathbf{v}_{t}\text{,    }\mathbf{v}_{t}\sim\mathcal{N}(\mathbf{0},\mathbf{V}),\text{ }\mathbf{V}\succeq 0,
\nonumber \\ \quad &\quad \mathbf{\hat{x}}_t = \Psi_t( \mathbf{\hat{x}}_{t-1}, \mathbf{y}_{t},  \mathbf{x}^2_{t}, \mathbf{u}_{t-1}) \text{, }\mathbf{u}_t = \mathbf{K}\mathbf{\hat{x}}_t, \nonumber
\end{align}
\end{subequations}
where we identify the DI  (\ref{eq:communicationCost}) as the \textit{communication cost} and the quadratic  (\ref{eq:controlCost}) as the \textit{control cost}\cite{oronNew}. All expectations are under the measure induced by $\mathbf{C}$, $\mathbf{V}$, and Fig. \ref{fig:conjecture}. In this section we derive a convex program from (\ref{eq:almost_tractable}). We first simplify the cost (\ref{eq:communicationCost}) under the assumed architecture, deriving an expression in terms of Kalman filter error covariance matrices. 
\subsection{The rate and control costs in terms of KF variables}\label{ss:simpCost}
Under the architecture in Fig. \ref{fig:conjecture}, it can be verified that 
$\mathbf{x}_{1:t-1}\leftrightarrow \mathbf{x}^2_{1:t}, \mathbf{x}^1_t, \mathbf{y}_{1:t-1}\leftrightarrow \mathbf{y}_t$. Thus, by the chain rule $I(\mathbf{x}_{1:t};\mathbf{y}_{t} | \mathbf{x}^2_{1:t}, \mathbf{y}_{1:t-1}) =    I(\mathbf{x}_{t};\mathbf{y}_{t} | \mathbf{x}^2_{1:t}, \mathbf{y}_{1:t-1})$ and the communication cost (\ref{eq:communicationCost}) is given by
\begin{multline}\label{eq:mi_x_y}
    I(\mathbf{x}_{1:T}\rightarrow\mathbf{y}_{1:T}|| \mathbf{x}^2_{1:T}) = \sum\nolimits_{t=1}^{T}I(\mathbf{x}_{t};\mathbf{y}_{t} | \mathbf{x}^2_{1:t}, \mathbf{y}_{1:t-1}).
\end{multline}  Let $\mathbf{\tilde{x}}^1_{t}=\mathbb{E}[\mathbf{x}^{1}_{t}|\mathbf{x}^2_{1:t}, \mathbf{y}_{1:t-1}]$ and $\mathbf{\hat{x}}^1_{t} = \mathbb{E}[\mathbf{x}^{1}_{t}|\mathbf{x}^2_{1:t}, \mathbf{y}_{1:t}]$. Denote the residuals
$\mathbf{\tilde{r}}_{t}= \mathbf{x}^1_{t}-\mathbf{\tilde{x}}^1_{t}$ and $\mathbf{\hat{r}}_{t}=\mathbf{{x}}^1_{t}-\mathbf{\hat{x}}^1_{t}$. Since $\mathbf{\tilde{x}}^1_{t}$ and $\mathbf{\hat{x}}^1_{t}$  are measurable functions of $\mathbf{x}^2_{1:t}, \mathbf{y}_{1:t-1}$ and $\mathbf{x}^2_{1:t}, \mathbf{y}_{1:t}$ respectively,  by the definition of MI
\begin{multline}\label{eq:main_cf_entropies}
    I(\mathbf{x}^1_{t};\mathbf{y}_{t} | \mathbf{x}^2_{1:t}, \mathbf{y}_{1:t-1}) =\\ h( \mathbf{\tilde{r}}_{t} | \mathbf{x}^2_{1:t}, \mathbf{y}_{1:t-1})-h( \mathbf{\hat{r}}_{t} | \mathbf{x}^2_{1:t}, \mathbf{y}_{1:t}). 
\end{multline} By the joint Gaussianity of $\mathbf{x}_{1:t}$ and $\mathbf{y}_{1:t}$ and the orthogonality principle, $\mathbf{\tilde{r}}_{t}$ is Gaussian, has $\mathbb{E}[\mathbf{\tilde{r}}_{t}]=\mathbf{0}$, and is independent of $\mathbf{x}^2_{1:t}, \mathbf{y}_{1:t-1}$. Likewise $\mathbf{\hat{r}}_{t}$ is Gaussian, has $\mathbb{E}[\mathbf{\hat{r}}_{t}]=\mathbf{0}$, and is independent of $\mathbf{x}^2_{1:t}, \mathbf{y}_{1:t}$. Define $\mathbf{\tilde{P}}_{t} = \mathbb{E}[\mathbf{\tilde{r}}_{t}\mathbf{\tilde{r}}_{t}^{\mathrm{T}}]$ and $\mathbf{\hat{P}}_{t} = \mathbb{E}[\mathbf{\hat{r}}_{t}\mathbf{\hat{r}}_{t}^{\mathrm{T}}]$. The differential entropy of $\mathbf{z}\sim\mathcal{N}(\mathbf{0}_{d},\boldsymbol{\Sigma})$ is $h(\mathbf{z}) = \frac{1}{2}(\log\det(\boldsymbol{\Sigma})+d\log(2\pi e))$ \cite{elemIT}. Thus (\ref{eq:main_cf_entropies}) is  \begin{align}
      I(\mathbf{x}^1_{t};\mathbf{y}_{t} | \mathbf{x}^2_{1:t}, \mathbf{y}_{1:t-1}) =  \frac{1}{2}(\log\det\mathbf{\tilde{P}}_{t}-\log\det\mathbf{\hat{P}}_{t}).
\end{align} Thus, the rate cost function in (\ref{eq:almost_tractable}) may be written
\begin{multline}\label{eq:rateCostCov}
   \underset{t\rightarrow\infty}{\lim\sup}\text{ } \frac{1}{T}I(\mathbf{x}_{1:T}\rightarrow\mathbf{y}_{1:T}|| \mathbf{x}^2_{1:T})  = \\ \underset{t\rightarrow\infty}{\lim\sup}\frac{1}{2T}\sum\nolimits_{t=1}^{T}\log\det\mathbf{\tilde{P}}_{t}-\log\det\mathbf{\hat{P}}_{t}.
\end{multline} Under the present assumptions (cf. \cite{tanakaISIT}\cite{SDP_DI}), the control cost may also be written in terms of $\mathbf{\hat{P}}_{t}$. Let $\boldsymbol{\Theta}$ be the upper left $n\times n$ block of $\mathbf{K}^{\mathrm{T}}(\mathbf{B}^{\mathrm{T}}\mathbf{S}\mathbf{B}+\mathbf{R})\mathbf{K}$. We have
\begin{multline}\label{eq:controlCostCov}
 \underset{T\rightarrow \infty}{\lim\sup}\frac{1}{T}\sum\nolimits_{t=1}^{T}\mathbb{E}[\lVert \mathbf{x}_{t+1} \rVert_{\mathbf{Q}}^{2} +\lVert \mathbf{u}_{t} \rVert_{\mathbf{R}}^{2}] =\\ \underset{T\rightarrow \infty}{\lim\sup}\frac{1}{T}\sum\nolimits_{i=1}^{\infty}\text{Tr}(\boldsymbol{\Theta}\mathbf{\hat{P}}_{t})+\text{Tr}(\mathbf{S}\mathbf{W}).
\end{multline} In the sequel, we recast (\ref{eq:almost_tractable}) in terms of $\mathbf{\hat{P}}_{t}$ and $\mathbf{\tilde{P}}_{t}$.
\subsection{The constraints in terms of Kalman filter variables}\label{ssec:constKalman}
In this subsection, we derive constraints between the residual covariance matrices and conclude the simplification of (\ref{eq:almost_tractable}). The sequences $\{\mathbf{\tilde{P}}_{t}\}$ and $\{\mathbf{\hat{P}}_{t}\}$ are related via a Riccati recursion we derive via considering the implementation of the Kalman filter from Fig. \ref{fig:conjecture} depicted in Fig. \ref{fig:kf2}.  \begin{figure}[]
	\centering
	\includegraphics[scale = .17]{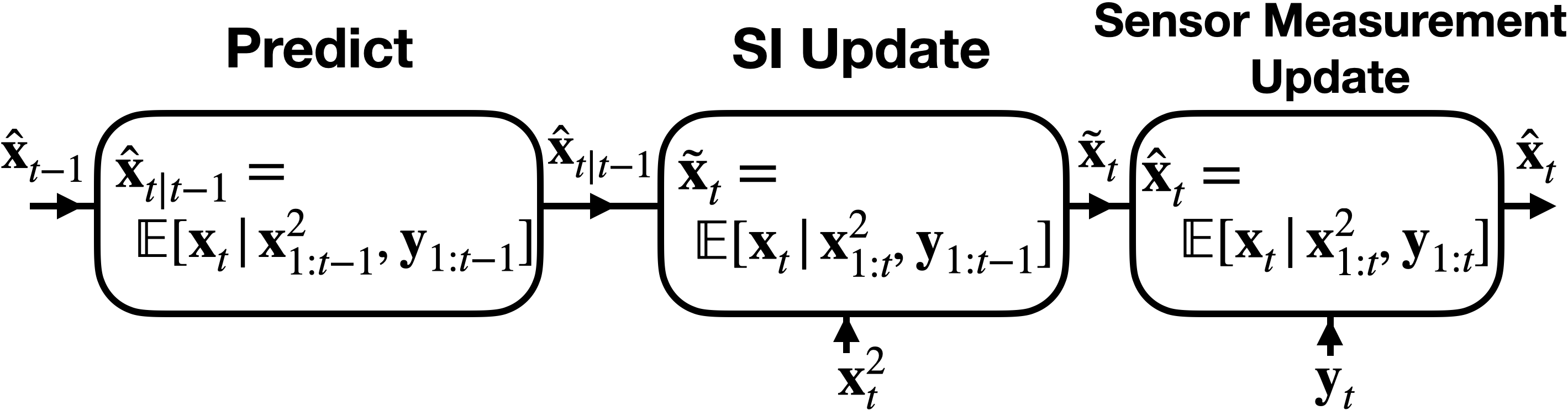}
	\vspace{-.45cm}
	\caption{A depiction of the Kalman filtering process with two measurement updates. The first update is after acquiring the SI $(\mathbf{x}^2)_t$, meanwhile the second is after acquiring the sensor measurement $\mathbf{y}_t$. In the present setting, joint Gaussianity ensures the filter computes MMSE estimators. The residuals are uncorrelated, and thus independent, of the respective observations (cf. \ref{ss:simpCost}).}
 	\label{fig:kf2}
\end{figure}  

Define the \textit{a posteriori} state estimate of $\mathbf{x}$ at time $t-1$ as $\mathbf{\hat{x}}_{t-1}$. This is the estimator given $\mathbf{x}^2_{1:t-1}$ and $\mathbf{y}_{1:t-1}$ and is given by 
    $\mathbf{\hat{x}}_{t-1} = [(\mathbf{\hat{x}}^1_{t-1})^{\mathrm{T}}\text{, }  (\mathbf{x}_{t-1}^2)^{\mathrm{T}}]^{\mathrm{T}}$ (cf. Sec. \ref{sec:cvx}). 
Since $\mathbf{x}_{t-1}^2$ is observed noiselessly there is no error in estimating $\mathbf{x}_{t-1}^2$; we thus defined the residual, $\mathbf{\hat{r}}_{t-1}$, with respect to $\mathbf{x}^1_{t-1}$ (only).  The orthogonality principle and Gaussianity ensures that $\mathbf{\hat{r}}_{t-1}$ is independent of $\mathbf{x}^2_{1:t-1},\mathbf{y}_{1:t-1}$. 

Denote the \textit{a priori} state estimate for time $t$ as $\mathbf{\hat{x}}_{t|t-1}$. Given the linear feedback control, $\mathbf{\hat{x}}_{t|t-1}$ is a linear function of  $\mathbf{\hat{x}}_{t-1}$ and is precisely the MMSE estimator $\mathbf{\hat{x}}_{t|t-1} = \mathbb{E}[\mathbf{x}_{t-1}|\mathbf{x}^2_{1:t-1}, \mathbf{y}_{1:t-1}]$. Denote the a priori residual process $\mathbf{\hat{r}}_{t|t-1} =  \mathbf{x}_{t}-\mathbf{\hat{x}}_{t|t-1}$.
 Note that in contrast to the definition of $\mathbf{\hat{r}}_{t-1}$,  $\mathbf{\hat{r}}_{t|t-1}$ contains residuals from estimating (predicting) both $\mathbf{x}^1_{t}$ and
$\mathbf{x}^2_{t}$.  It can be shown that $\mathbb{E}[ \mathbf{\hat{r}}_{t|t-1} ] = \mathbf{0}$. Denote the covariance matrix 
$\mathbf{P}_{t|t-1} = \mathbb{E}[\mathbf{\hat{r}}_{t|t-1}\mathbf{\hat{r}}_{t|t-1}^{\mathrm{T}}]$.  Let $\mathbf{\bar{A}} = [\mathbf{A}_{11}^{\mathrm{T}},\mathbf{A}_{21}^{\mathrm{T}}
    ]^{\mathrm{T}}$. 
By direct substitution $\mathbf{P}_{t|t-1} =\mathbf{\bar{A}}\mathbf{\hat{P}}_{t-1}\mathbf{\bar{A}}^{\mathrm{T}}+\mathbf{W}$, where $\mathbf{\hat{P}}_{t-1}$ is covariance of $\mathbf{\hat{r}}_{t-1}$ defined in \ref{ss:simpCost}. 

The estimator after the SI update (the noiseless observation of $\mathbf{x}^2_{t}$)  at time $t$ is given by $ \mathbf{\tilde{x}}_{t} = [(\mathbf{\tilde{x}}^1_{t})^{\mathrm{T}}\text{, } (\mathbf{x}_{t}^2)^{\mathrm{T}} ]^{\mathrm{T}}$ (cf. Sec. \ref{sec:cvx}). Again, $\mathbf{\tilde{x}}_{t} =\mathbb{E}[\mathbf{x}_t|\mathbf{x}^2_{1:t},\mathbf{y}_{1:t-1}]$ and is a linear function of $\mathbf{\hat{x}}_{t|t-1}$ and $\mathbf{x}^2_{t}$. The residual, $\mathbf{\tilde{r}}_{t}$, is again defined with respect to the error estimating $\mathbf{x}^1_{t}$ only (as in Sec. \ref{sec:cvx}). Let $\mathbf{P}^{11}_{t|t-1}\in \mathbb{R}^{n\times n}$, $\mathbf{P}^{12}_{t|t-1}\in \mathbb{R}^{n\times m}$, $\mathbf{P}^{21}_{t|t-1}\in \mathbb{R}^{m\times n}$, $\mathbf{P}^{22}_{t|t-1}\in \mathbb{R}^{m\times m}$ be such that
\begin{align}
    \mathbf{P}_{t|t-1} = \begin{bmatrix} \mathbf{P}_{t|t-1}^{11} & \mathbf{P}_{t|t-1}^{12} \\ \mathbf{P}_{t|t-1}^{21} & \mathbf{P}_{t|t-1}^{22}
    \end{bmatrix}.
\end{align} The covariance of the residual $\mathbf{\tilde{r}}_{t}$ (cf. \ref{ss:simpCost}) follows from a standard Shur complement result
\begin{align}\label{eq:sideInfoUpdate}
    \mathbf{\tilde{P}}_{t} = \mathbf{P}^{11}_{t|t-1} - \mathbf{P}^{12}_{t|t-1}(\mathbf{P}^{22}_{t|t-1})^{-1}\mathbf{P}^{21}_{t|t-1}.
\end{align} Finally, the sensor measurement update computes the posterior state estimate at time $t$. It can be shown that $\mathbf{\hat{P}}_{t}$ is given by   \begin{align}\label{eq:informationUpdateMeasurement}
    \mathbf{\hat{P}}_{t}^{-1}=\mathbf{\tilde{P}}_t^{-1}+ \mathbf{C}_{1}^{\mathrm{T}}\mathbf{V}^{-1}\mathbf{C}_{1},
\end{align}  which demonstrates that $\mathbf{C}_{2}$ is completely arbitrary.

Let $\mathbf{F}=\mathbf{A}_{21}^{\mathrm{T}}\mathbf{W}_{22}^{-1}\mathbf{A}_{21}$. Using (\ref{eq:sideInfoUpdate}), the matrix inversion lemma gives
\begin{align}\label{eq:priorNewNotation}
    \mathbf{\tilde{P}}_{t+1} = \mathbf{W}_{11}+\mathbf{A}_{11}\left(\mathbf{\hat{P}}_t^{-1}+\mathbf{F}\right)^{-1}\mathbf{A}_{11}^{\mathrm{T}}.
\end{align} Substituting  (\ref{eq:informationUpdateMeasurement}) into (\ref{eq:priorNewNotation}) gives a recursion for $\mathbf{\tilde{P}}$ via
\begin{align}\label{eq:RDE}
    \mathbf{\tilde{P}}_{t+1} = \mathbf{W}_{11}+ \mathbf{A}_{11}\left(\mathbf{\tilde{P}}^{-1}_{t}+\mathbf{C}_{1}^{\mathrm{T}}\mathbf{V}^{-1}\mathbf{C}_{1}+\mathbf{F} \right)^{-1}\mathbf{A}_{11}^{\mathrm{T}}.
\end{align} The matrix inversion lemma demonstrates that (\ref{eq:RDE}) is a Riccati difference equation \cite{RDE_convergence}. Given an initial condition, the recursion (\ref{eq:RDE}) converges under a variety of circumstances\cite{RDE_convergence}\cite{kailathBk}. If it exists, the steady state solution $\mathbf{\tilde{P}}_{\infty}$ solves the discrete algebraic Riccati equation
\begin{align}\label{eq:DARE}
     \mathbf{\tilde{P}}_{\infty} = \mathbf{W}_{11}+ \mathbf{A}_{11}\left(\mathbf{\tilde{P}}^{-1}_{\infty}+\mathbf{C}_{1}^{\mathrm{T}}\mathbf{V}^{-1}\mathbf{C}_{1}+\mathbf{F} \right)^{-1}\mathbf{A}_{11}^{\mathrm{T}}.
\end{align}  In particular, \cite[Theorem 4.1]{RDE_convergence} establishes convergence to a unique, positive definite solution when $(\mathbf{A}_{11},\mathbf{W}^{\frac{1}{2}}_{11})$ is stabilizable and $([\mathbf{C}_{1}^{\mathrm{T}},\mathbf{A}_{21}^{\mathrm{T}}]^{\mathrm{T}},\mathbf{A}_{11})$ is detectable \cite{RDE_convergence}. The stabilizability is immediate as $\mathbf{W}_{11}>0$. Furthermore, in the present setting, the existence of a positive definite solution to (\ref{eq:DARE}) can be shown to imply that $([\mathbf{C}_{1}^{\mathrm{T}},\mathbf{A}_{21}^{\mathrm{T}}]^{\mathrm{T}},\mathbf{A})$ is detectable via a discrete time Liaponov equation.  We restrict our attention to the case that  $([\mathbf{C}_{1}^{\mathrm{T}},\mathbf{A}_{21}^{\mathrm{T}}]^{\mathrm{T}},\mathbf{A})$ is detectable.

Convergence of $\{\mathbf{\tilde{P}}_{t}\}$ implies that  $\{\mathbf{\hat{P}}_{t}\}$ also converges. The limits of $\{\mathbf{\hat{P}}_{t}\}$ and $\{\mathbf{\tilde{P}}_{t}\}$  must satisfy both
\begin{subequations}\label{eq:convergenceConstraints}
\begin{align}\label{eq:tilde2hat}
\mathbf{\hat{P}}^{-1}_{\infty} = \mathbf{\tilde{P}}_{\infty}^{-1}+\mathbf{C}_{1}^{\mathrm{T}}\mathbf{V}^{-1}\mathbf{C}_{1}&\text{, and}
\end{align}
\begin{align}\label{eq:hat2tilde}
\mathbf{\tilde{P}}_{\infty} = \mathbf{W}_{11}+\mathbf{A}_{11}\left(\mathbf{\hat{P}}_\infty^{-1}+\mathbf{F}\right)^{-1}\mathbf{A}_{11}^{\mathrm{T}}.
\end{align} 

\end{subequations} Given $\mathbf{C}$ and $\mathbf{V}$, if such a $\mathbf{\tilde{P}}_{\infty}>0$ and $\mathbf{\hat{P}}_{\infty}>0$ can be found, the resulting $\mathbf{\tilde{P}}_{\infty}$ will satisfy (\ref{eq:DARE}). If for some $\mathbf{C}$ and $\mathbf{V}$ there exists $\mathbf{\tilde{P}}_{\infty}$ (necessarily positive definite) satisfying (\ref{eq:convergenceConstraints}), it follows that $([\mathbf{C}_{1}^{\mathrm{T}},\mathbf{A}_{21}^{\mathrm{T}}]^{\mathrm{T}},\mathbf{A})$ is detectable and that both $\mathbf{\tilde{P}}_{t}\rightarrow\mathbf{\tilde{P}}_{\infty}$ and $\mathbf{\hat{P}}_{t}\rightarrow\mathbf{\hat{P}}_{\infty}$. Using (\ref{eq:rateCostCov}), a standard Ces\'{a}ro mean (cf. \cite{elemIT}) argument gives that
\begin{align}\label{eq:rateSimp}
     \underset{T\rightarrow\infty}{\lim\sup}\text{ } \frac{1}{T}I(\mathbf{x}_{1:T}\rightarrow\mathbf{y}_{1:T}||\mathbf{x}^2_{1:T}) = \frac{1}{2}\log\frac{\det\mathbf{\tilde{P}}_{\infty}}{\det\mathbf{\hat{P}}_{\infty}}.
\end{align} Similarly, using (\ref{eq:controlCostCov}) we have that 
\begin{align}\label{eq:controlSimp}
 \underset{T\rightarrow \infty}{\lim\sup}\frac{1}{T}\sum\limits_{t=1}^{T}\mathbb{E}[\lVert \mathbf{x}_{t+1} \rVert_{\mathbf{Q}}^{2} +\lVert \mathbf{u}_{t} \rVert_{\mathbf{R}}^{2}] = \text{Tr}(\mathbf{\Theta}\mathbf{\hat{P}}_{\infty}+\mathbf{S}\mathbf{W}).
\end{align} In the following subsection, we use these results to derive a convex program for the rate distortion problem (\ref{eq:almost_tractable}).

\subsection{Derivation of the convex program}\label{ssec:policyOpt}
Define $\mathbf{\hat{P}}\overset{\Delta}{=}\mathbf{\hat{P}}_{\infty}$ and $\mathbf{\tilde{P}}\overset{\Delta}{=}\mathbf{\tilde{P}}_{\infty}$. Substituting  (\ref{eq:convergenceConstraints}), (\ref{eq:rateSimp}), and  (\ref{eq:controlSimp}) into  (\ref{eq:almost_tractable}) yields the finite dimensional optimization  
\begin{mini}|s|
{\mathbf{C}, \mathbf{V}}{\frac{1}{2}(\log\det\mathbf{\tilde{P}}-\log\det\mathbf{\hat{P}}) }
{\label{eq:finiteDimensional}}{}
\addConstraint{\mathbf{V}\succeq 0\text{, } \mathbf{\tilde{P}}\succeq 0\text{, } \text{Tr}(\mathbf{\Theta}\mathbf{\hat{P}}+\mathbf{S}\mathbf{W}) \le \gamma}
\addConstraint{\mathbf{\hat{P}}^{-1} = \mathbf{\tilde{P}}^{-1}+\mathbf{C}_{1}^{\mathrm{T}}\mathbf{V}^{-1}\mathbf{C}_{1}}{}
\addConstraint{\mathbf{\tilde{P}} = \mathbf{W}_{11}+\mathbf{A}_{11}(\mathbf{\hat{P}}^{-1}+\mathbf{F})^{-1}\mathbf{A}_{11}^{\mathrm{T}} }.
\end{mini} Since $\mathbf{W}_{11}\succ 0$,  $\mathbf{\tilde{P}}\succ 0$. The minimum in (\ref{eq:finiteDimensional}) can be  found\goodbreak\noindent by the convex optimization
\begin{mini}|s|
{\mathbf{\hat{P}},\boldsymbol{\Pi}}{\frac{\log\det\mathbf{W}-\log\det\boldsymbol{\Pi}-\log\det{
( \mathbf{W}_{22}+\mathbf{A}_{21}\mathbf{\hat{P}}\mathbf{A}_{21}^{\mathrm{T}})}}{2} }
{\label{eq:convexProgram}}{}
\addConstraint{\mathbf{\hat{P}}\succ {0}, \boldsymbol{\Pi}\succeq 0, \text{Tr}(\mathbf{\Theta}\mathbf{\hat{P}})+\text{Tr}(\mathbf{S}\mathbf{W}) \le \gamma}
\addConstraint{  \mathbf{W}+\mathbf{\bar{A}}\mathbf{\hat{P}}\mathbf{\bar{A}}^{\mathrm{T}} -\begin{bmatrix}\mathbf{\hat{P}} & 0\\ 0 & 0\end{bmatrix} \succeq 0 }{}
\addConstraint{ \begin{bmatrix}
\mathbf{\hat{P}}-\boldsymbol{\Pi} & \mathbf{\hat{P}}\mathbf{\bar{A}}^{\mathrm{T}} \\ \mathbf{\bar{A}}\mathbf{\hat{P}} & \mathbf{W}+\mathbf{\bar{A}}\mathbf{\hat{P}}\mathbf{\bar{A}}^{\mathrm{T}}
\end{bmatrix}\succeq 0}.
\end{mini}  Details are given in \cite[Appendix \ref{app:convexification}]{our_appendix}. Let $\mathbf{\hat{P}}_{\min}$ be the minimizer in (\ref{eq:convexProgram}), and let  $\mathbf{\tilde{P}}_{\min}$ be given by (\ref{eq:hat2tilde}). The minimizers $\mathbf{C}_{1}$ and $\mathbf{V}$ are the set of matrices satisfying $\mathbf{\hat{P}}_{\min}^{-1}-\mathbf{\tilde{P}}^{-1}_{\min}=\mathbf{C}_{1}\mathbf{V}^{-1}\mathbf{C}^{\mathrm{T}}_{1}$. Without loss of generality, we choose $\mathbf{V}=\mathbf{I}$, $\mathbf{C}_{1}$ the corresponding minimizer, and $\mathbf{C}_{2}=\mathbf{0}$. We now show that the minimum is nearly achievable in the architecture of Fig. \ref{fig:flowchart}. 
\section{Quantization and Prefix Free Coding}\label{sec:achieve}
The architecture used to demonstrate the achievability  result follows from \cite[IV]{tanakaISIT}, and is shown in Fig. \ref{fig:quantizerImp}. As in \cite{tanakaISIT}, we use  a predictive elementwise uniform quantizer with subtractive dither.  
\begin{figure}[]
	\centering
	\includegraphics[scale = .215]{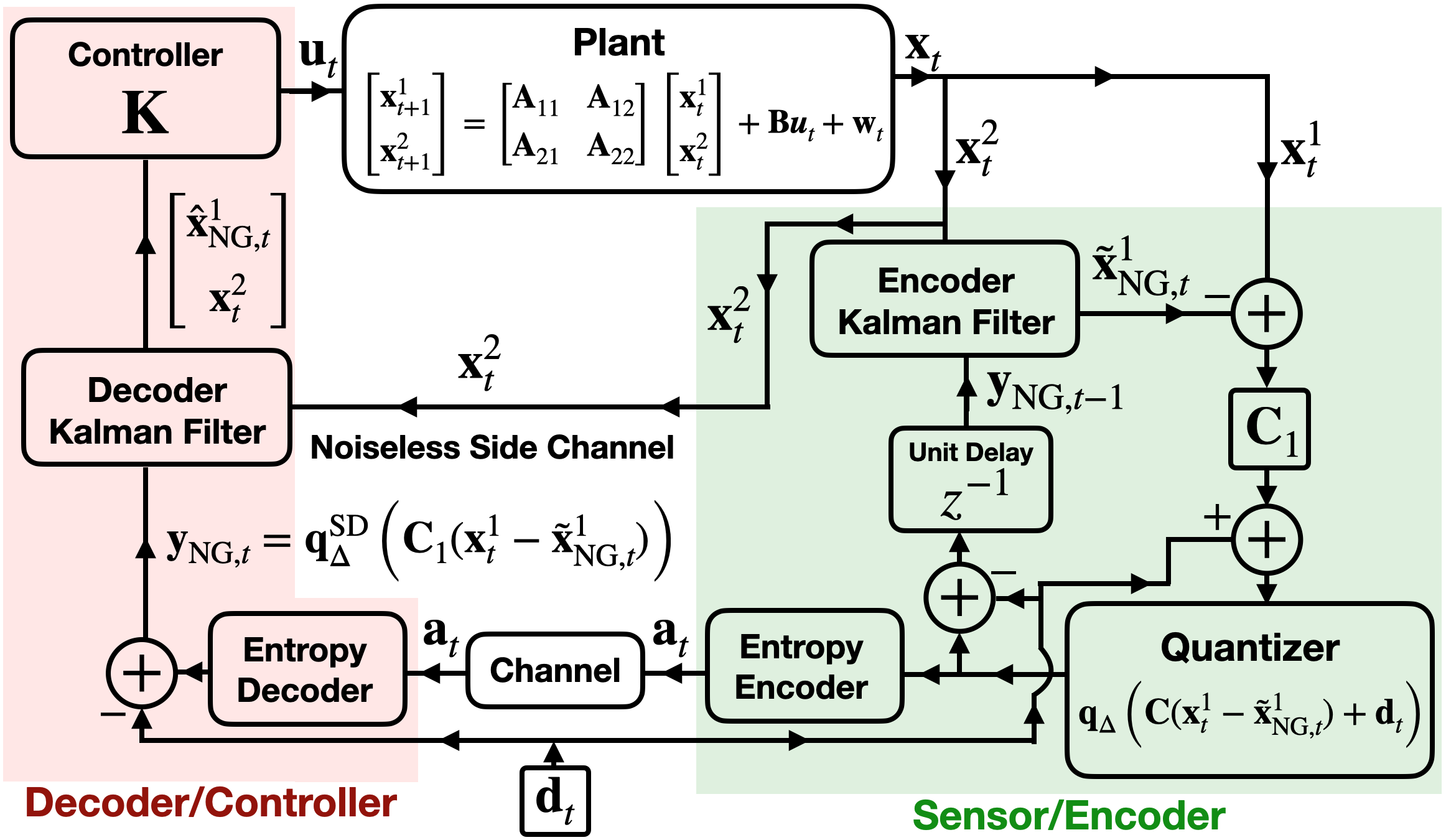}
    \vspace{-.3cm}
	\caption{The dither signal $[\mathbf{d}_{t}]_{i}\sim \text{Uniform}([-\frac{\Delta}{2},\frac{\Delta}{2}))$ IID over $i$, $t$ is independent of $\mathbf{x}_{1:t}$, $\mathbf{y}_{1:t-1}$, $\mathbf{u}_{1:t-1}$, $\mathbf{a}_{1:t-1}$ but is assumed to be known at both the encoder and decoder. In practice, this ``shared randomness" could be accomplished by using synchronized pseudorandom number generators at both the encoder and decoder. }
	\label{fig:quantizerImp}
\end{figure} 
We define an elementwise uniform quantizer with sensitivity $\Delta$ as a function $\mathbf{q}_{\Delta}: \mathbb{R}^{n}\rightarrow\mathbb{R}^n$ such that
\begin{align}
    [\mathbf{q}_{\Delta}(\mathbf{z})]_{i} = m\Delta \text{ if } [\mathbf{z}]_{i}\in [m\Delta-\frac{\Delta }{2},m\Delta+\frac{\Delta}{2}),
\end{align} e.g. each element of $\mathbf{z}$ is ``rounded" to the nearest integer multiple of $\Delta$. For a random input $\mathbf{z}$, $\mathbf{q}_{\Delta}(\mathbf{z})$ is a discrete RV with countable support. Consider the random vector $\mathbf{d}\in\mathbb{R}^{n}$ where $[\mathbf{d}]_{i}\sim\text{Uniform}[-\frac{\Delta}{2},\frac{\Delta}{2}]$ IID over $i$ and independent of $\mathbf{z}$. Define the quantizer with \textit{subtractive dither} via
\begin{align}\label{eq:dither}
    \mathbf{q}^{\mathrm{SD}}_{\Delta}(\mathbf{z}) = \mathbf{q}_{\Delta}(\mathbf{z}+\mathbf{d})-\mathbf{d} 
\end{align} Dithering allows the quantization error to manifest as additive \textit{uniform} noise; it can be shown that $\mathbf{n}=\mathbf{z}- \mathbf{q}^{\mathrm{SD}}_{\Delta}(\mathbf{z})$ is independent of $\mathbf{z}$ and that the elements $[\mathbf{n}]_{i}$ are IID with $[\mathbf{n}]_{i}\sim\text{Uniform}[-\frac{\Delta}{2},\frac{\Delta}{2}]$\cite[Lemma 1a]{tanakaISIT}\cite{ditherQuant}.
The caption of Fig. \ref{fig:quantizerImp} outlines the use of dithering this achievability result.\goodbreak\noindent

We now show that when $\Delta = 2\sqrt{3}$, the system in Fig. \ref{fig:quantizerImp} achieves an equivalent control performance as the as the architecture in Fig. \ref{fig:conjecture} for equivalent $\mathbf{C}_1$ and $\mathbf{V}=\mathbf{I}$. In Fig. \ref{fig:quantizerImp}, at time $t$ the decoder observes a dithered quantized measurement of $\mathbf{x}^1$, denoted $\mathbf{y}^{\mathrm{NG}}_{t}$ and to be described presently. The measurement is predictive and defined recursively via an encoder KF process. At time $t$, a KF at the encoder computes
\begin{align}
\mathbf{\tilde{x}}^{1,\mathrm{NG}}_{t} = \text{The LMMSE estimate of $\mathbf{x}^1$ given $\mathbf{y}^{\mathrm{NG}}_{1:t-1}$, $\mathbf{x}^2_{1:t}$}. \nonumber
\end{align} The encoder's quantizer computes the discrete $\mathbf{\tilde{z}}_t= \mathbf{q}_{\Delta}( \mathbf{C}_{1}(\mathbf{x}^1_{t}-\mathbf{\tilde{x}}^{1,\mathrm{NG}}_{t})+\mathbf{d}_{t})$, and encodes $\mathbf{\tilde{z}}_t$ with a prefix-free lossless Shannon-Fano-Elias (SFE) code. The codeword is sent to the decoder, which (exactly) reconstructs $\mathbf{\tilde{z}}_t$. 

Given the dither signal, the decoder forms $\mathbf{y}^{\mathrm{NG}}_{t}=\mathbf{\tilde{z}}_t-\mathbf{d}_{t}$, or, equivalently $\mathbf{y}^{\mathrm{NG}}_{t} = \mathbf{q}_{\Delta}^{\mathrm{SD}}\left( \mathbf{C}_{1}(\mathbf{x}^1_{t}-\mathbf{\tilde{x}}^{1,\mathrm{NG}}_{t})\right)$. This gives 
\begin{align}
    \mathbf{y}^{\mathrm{NG}}_{t} = \mathbf{C}_{1}\mathbf{x}^1_{t}-\mathbf{C}_{1}\mathbf{\tilde{x}}^{1,\mathrm{NG}}_{t}+\mathbf{n}_{t}\nonumber
\end{align} where $\mathbf{n}_{t}$ is a zero mean, uniform random vector with IID elements and $\mathbb{E}[\mathbf{n}_{t}\mathbf{n}_{t}^{\mathrm{T}}] = \mathbf{I}$. The decoder side Kalman filter operates analogously to the two stage filter in Fig. \ref{fig:kf2}. Having received the previous measurements $\mathbf{y}^{\mathrm{NG}}_{1:t-1}$ and the SI $\mathbf{x}^2_{1:t}$, the decoder can compute  $\mathbf{\tilde{x}}^{1,\mathrm{NG}}_{t}$ and form a centered measurement $\mathbf{\overline{y}}^{\mathrm{NG}}_{t}=\mathbf{y}^{\mathrm{NG}}_{t} + \mathbf{C}_{1}\mathbf{\tilde{x}}^{1,\mathrm{NG}}_{t}$. It clear that,
\begin{align}
    \mathbf{\hat{x}}^{1,\mathrm{NG}}_{t} = \text{The LMMSE estimate of $\mathbf{x}_{t}^1$ given $\mathbf{y}^{\mathrm{NG}}_{1:t}$,  $\mathbf{x}^2_{1:t}$},
\end{align} is the same as the LMMSE of $\mathbf{x}_{t}^1$ given $\mathbf{\overline{y}}^{\mathrm{NG}}_{1:t}$ and $\mathbf{x}^2_{1:t}$. The controller forms the control input $\mathbf{u}_{t}= \mathbf{K}[
 (\mathbf{\hat{x}}^{1,\mathrm{NG}}_{t})^{\mathrm{T}} \text{, }(\mathbf{{x}}_{t}^2)^{\mathrm{T}}
]^{\mathrm{T}}$ where $\mathbf{K}$ is as in (\ref{eq:kdef}). A corollary to the proof of \cite[Lemma 1a]{tanakaISIT} demonstrates that under this  (really any) feedback arrangement, the sequence of quantization noises $\{\mathbf{n}_{t}\}$ is temporally white, e.g. $\mathbf{E}[\mathbf{n}_{t}\mathbf{n}_{t'}^{\mathrm{T}}] = \mathbf{0}$ if $t\neq t'$.

This leads to a result analogous to \cite[Lemma 2]{tanakaISIT}. Having fixed $\mathbf{C}_{1}$ and $\mathbf{V}=\mathbf{I}$, denote the jointly Gaussian random variables $(\mathbf{x}^{}_{t},\mathbf{\tilde{x}}^{1}_{t},  \mathbf{\hat{x}}^{1}_{t})$ with the joint distribution induced by the architecture in Fig. \ref{fig:conjecture} by $(\mathbf{x}^{\mathrm{G}}_{t}, \mathbf{\tilde{x}}^{1,\mathrm{G}}_{t}, \mathbf{\hat{x}}^{1,\mathrm{G}}_{t})$. Likewise, denote the (generally non-Gaussian) RVs  $(\mathbf{x}^{}_{t}, \mathbf{\tilde{x}}^{1}_{t}, \mathbf{\hat{x}}^{1}_{t}, )$ with the joint distribution induced by the architecture in Fig. \ref{fig:quantizerImp} by 
$(\mathbf{x}^{\mathrm{NG}}_{t}, \mathbf{\tilde{x}}^{1,\mathrm{NG}}_{t}, \mathbf{\hat{x}}^{1,\mathrm{NG}}_{t})$. We have the following lemma.
\begin{lemma}\label{lemma:secondMoments}
If RVs describing the initial conditions $\mathbf{x}^{\mathrm{NG}}_{1}$ and  $\mathbf{x}^{\mathrm{G}}_{1}$ have identical first and second moments, then the processes  $\{(\mathbf{x}^{\mathrm{NG}}_{t}, \mathbf{\hat{x}}^{1,\mathrm{NG}}_{t}, \mathbf{\tilde{x}}^{1,\mathrm{NG}}_{t})\}$ and $\{(\mathbf{x}^{\mathrm{G}}_{t}, \mathbf{\tilde{x}}^{1,\mathrm{G}}_{t}, \mathbf{\hat{x}}^{1,\mathrm{G}}_{t})\}$ are equivalent up to second moments. Regardless of initial conditions $
\mathbb{E}[(\mathbf{x}_{t}^{1,\mathrm{NG}}-\mathbf{\hat{x}}_{t}^{1,\mathrm{NG}})(\mathbf{x}_{t}^{1,\mathrm{NG}}-\mathbf{\hat{x}}_{t}^{1,\mathrm{NG}})^{\mathrm{T}}]\rightarrow \mathbf{\hat{P}}$.
\end{lemma}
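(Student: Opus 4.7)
The plan is to proceed by induction on $t$, exploiting the fact that both architectures are purely linear in their driving inputs and that dithering makes the quantization noise $\mathbf{n}_t$ statistically indistinguishable from $\mathbf{v}_t$ up to second moments. First I would recast the non-Gaussian (NG) architecture in a form directly comparable to the Gaussian (G) one. Using the centered measurement $\overline{\mathbf{y}}^{\mathrm{NG}}_t = \mathbf{C}_1 \mathbf{x}^{1,\mathrm{NG}}_t + \mathbf{n}_t$ identified in the paragraphs preceding the lemma, I note that: $\mathbf{n}_t$ is zero-mean; with $\Delta = 2\sqrt{3}$ each component of $\mathbf{n}_t$ is uniform on $[-\sqrt{3},\sqrt{3}]$ with variance $1$, so $\mathbb{E}[\mathbf{n}_t\mathbf{n}_t^{\mathrm{T}}]=\mathbf{I}=\mathbf{V}$; by \cite[Lemma 1a]{tanakaISIT} it is independent of the quantizer input and hence of $\mathbf{x}^{1,\mathrm{NG}}_{1:t}, \overline{\mathbf{y}}^{\mathrm{NG}}_{1:t-1}$; and by the corollary cited immediately before the lemma, $\{\mathbf{n}_t\}$ is temporally white. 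Thus the measurement noise in Fig.~\ref{fig:quantizerImp} matches $\mathbf{v}_t$ in Fig.~\ref{fig:conjecture} in its first two moments and in its conditional independence relations with the past.

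Next, I would observe that both architectures are fully linear: the plant dynamics, the KF update $\Psi_t$, and the certainty equivalence law $\mathbf{u}_t = \mathbf{K}\hat{\mathbf{x}}_t$ are all linear functions of their arguments with coefficients determined by $(\mathbf{A}, \mathbf{B}, \mathbf{C}_1, \mathbf{W}, \mathbf{V}=\mathbf{I})$. Since the KF computes the LMMSE estimator and LMMSE depends only on second-order statistics, the filter gains at time $t$ coincide across the two architectures. Consequently every random vector in either system is a \emph{fixed} linear combination, with identical coefficients, of the initial state, the process noises $\{\mathbf{w}_s\}_{s<t}$, and the measurement perturbations ($\{\mathbf{v}_s\}$ or $\{\mathbf{n}_s\}$). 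The inductive step is then immediate: if $(\mathbf{x}^{\mathrm{NG}}_t, \tilde{\mathbf{x}}^{1,\mathrm{NG}}_t, \hat{\mathbf{x}}^{1,\mathrm{NG}}_t)$ and $(\mathbf{x}^{\mathrm{G}}_t, \tilde{\mathbf{x}}^{1,\mathrm{G}}_t, \hat{\mathbf{x}}^{1,\mathrm{G}}_t)$ share first and second moments, the same holds at $t+1$, since each variable at $t+1$ is a linear image (with identical operators) of variables at $t$ plus freshly introduced noise with matching moments and matching conditional independence from the past. The base case is the hypothesis on $\mathbf{x}^{\mathrm{NG}}_1$ and $\mathbf{x}^{\mathrm{G}}_1$.

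For the convergence statement, matching second moments implies that the NG posterior error covariance satisfies exactly the Riccati recursion derived in Sec.~\ref{ssec:constKalman}, because that recursion is driven solely by the LMMSE update formulas and the noise covariances, not by distributional form. Under the detectability and stabilizability conditions established in Sec.~\ref{ssec:constKalman}, this recursion converges to the unique positive-definite fixed point $\hat{\mathbf{P}}$ from any positive semidefinite initial covariance, giving the claimed convergence regardless of initial conditions. The main obstacle is the conditional-independence claim for $\mathbf{n}_t$, namely that the dither makes the quantization residual independent not only of the current quantizer input but of the entire past $\mathbf{x}^{\mathrm{NG}}_{1:t}, \overline{\mathbf{y}}^{\mathrm{NG}}_{1:t-1}$, and uncorrelated across time even under closed-loop feedback; without this, the LMMSE/Riccati machinery cannot be invoked. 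This is precisely the content packaged in the dithered-quantization facts cited from \cite{tanakaISIT,ditherQuant}, and the bulk of the write-up would be verifying that those hypotheses apply verbatim in our two-stage KF setting with side information.
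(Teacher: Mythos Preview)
Your proposal is correct and follows essentially the same approach as the paper: the paper's proof sketch compares the centered measurement model $\overline{\mathbf{y}}^{\mathrm{NG}}_t=\mathbf{C}_1\mathbf{x}^{1,\mathrm{NG}}_t+\mathbf{n}_t$ to the Gaussian model, observes that the uniform noise matches $\mathbf{v}_t$ in first and second moments, invokes an induction on $t$ for the first claim, and appeals to the identical Riccati recursion of Sec.~\ref{ssec:constKalman} for the convergence claim. Your write-up is simply a more explicit version of this argument, correctly flagging that the key technical ingredient is the conditional-independence and temporal-whiteness of $\{\mathbf{n}_t\}$ under closed-loop feedback supplied by the cited dithered-quantization results.
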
\noindent This result follows from comparing the measurement model
\begin{align}
     \mathbf{\overline{y}}^{\mathrm{NG}}_{t} =\mathbf{C}_{1}\mathbf{x}^{1,\mathrm{NG}}_{t}+\mathbf{n}_{t}
 \end{align} to the  linear/Gaussian model (\ref{eq:sensorModel}) under the assumed choices of $\mathbf{C}_{1}$ and $\mathbf{V}=\mathbf{I}$. While the additive  white noise is uniform, rather than Gaussian, it has $\mathbb{E}[\mathbf{n}_{t}]=\mathbf{0}$ and $\mathbb{E}[\mathbf{n}_{t}\mathbf{n}_{t}^{\mathrm{T}}]=\mathbf{I}$. The first statement follows from an induction on $t$. The latter follows as the Riccati recursion relating the covariance matrices of the error processes $\mathbf{x}_{t}^{1,\mathrm{NG}}-\mathbf{\hat{x}}_{t}^{1,\mathrm{NG}}$ and $\mathbf{x}_{t}^{1,\mathrm{NG}}-\mathbf{\tilde{x}}_{t}^{1,\mathrm{NG}}$ is identical to that derived in \ref{ssec:constKalman}. The same control cost is achieved in both systems (cf. (\ref{eq:controlSimp})).
 
It remains to bound the codeword length.  Recall the discrete \goodbreak\noindent RV $\mathbf{\tilde{z}}_t$, and define $\mathbf{z}_{t} =\mathbf{C}_{1}(\mathbf{x}^{1, \mathrm{NG}}_{t}-\mathbf{\tilde{x}}^{1,\mathrm{NG}}_{t})$. At every time $t$, by the SFE construction (cf. \cite{elemIT}) there exists a lossless, prefix-free code that encodes $\mathbf{\tilde{z}}_{t}$ with an expected length  $\mathbb{E}[\ell(\mathbf{a}_{t})]\le H(\mathbf{\tilde{z}}_{t}|\mathbf{d}_{t} )+2$. Consider the joint Gaussian case and define $\mathbf{\tilde{r}}^{\mathrm{G}}_{t}$ as in Sec. \ref{sec:cvx}. The next lemma is proved in \cite[Appendix \ref{app:ditherLemma}]{our_appendix}. 
\begin{lemma}[~\cite{tanakaISIT}]\label{lem:ditherLemma}
At every time $t$, we have
\begin{align}\label{eq:lemDithEq}
   H(\mathbf{\tilde{z}}_{t}|\mathbf{d}_{t}) \le \frac{n}{2}\log_{2}\frac{4\pi e}{12} +I(\mathbf{C}_{1}\mathbf{\tilde{r}}^{\mathrm{G}}_{t};\mathbf{C}_{1}\mathbf{\tilde{r}}^{\mathrm{G}}_{t}+\mathbf{v}_{t}).
\end{align}
\end{lemma}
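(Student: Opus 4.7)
The plan is to reduce the bound to the classical subtractively-dithered-quantization identity, then apply the Gaussian maximum-entropy principle. Write $\mathbf{z}_t := \mathbf{C}_1 \mathbf{\tilde{r}}^{\mathrm{NG}}_t$, so $\mathbf{\tilde{z}}_t = \mathbf{q}_\Delta(\mathbf{z}_t + \mathbf{d}_t)$.

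The key step is the Zamir--Feder-style identity (cf.\ \cite{ditherQuant,tanakaISIT})
\[ H(\mathbf{\tilde{z}}_t \mid \mathbf{d}_t) \;=\; I(\mathbf{z}_t;\, \mathbf{z}_t + \mathbf{n}'_t), \]
where $\mathbf{n}'_t \sim U[-\Delta/2,\Delta/2]^n$ is a \emph{fresh} uniform vector independent of $\mathbf{z}_t$ (distinct from the quantization-noise realization $\mathbf{n}_t$ defined after (\ref{eq:dither})). A short derivation uses an auxiliary independent uniform $\mathbf{u} \sim U[-\Delta/2,\Delta/2]^n$: since $\mathbf{\tilde{z}}_t$ is lattice valued, $\mathbf{\tilde{z}}_t = \mathbf{q}_\Delta(\mathbf{\tilde{z}}_t + \mathbf{u})$ almost surely, so $H(\mathbf{\tilde{z}}_t \mid \mathbf{d}_t) = h(\mathbf{\tilde{z}}_t + \mathbf{u} \mid \mathbf{d}_t) - h(\mathbf{u})$; combining with the subtractive-dither property $\mathbf{\tilde{z}}_t - \mathbf{d}_t = \mathbf{z}_t - \mathbf{n}_t$ (already invoked after (\ref{eq:dither})) and folding $\mathbf{u} + \mathbf{d}_t - \mathbf{n}_t$ back to the fundamental cell yields a uniform $\mathbf{n}'_t$ independent of $\mathbf{z}_t$, which gives the identity.

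Given Step~1, the rest is routine. By Lemma~\ref{lemma:secondMoments}, the covariance of $\mathbf{\tilde{r}}^{\mathrm{NG}}_t$ equals $\mathbf{\tilde{P}}$, and because $\Delta = 2\sqrt{3}$ the noise $\mathbf{n}'_t$ has covariance $\mathbf{I}$, matching that of $\mathbf{v}_t$. Thus $\mathbf{z}_t + \mathbf{n}'_t$ has the same second moments as the Gaussian $\mathbf{C}_1\mathbf{\tilde{r}}^{\mathrm{G}}_t + \mathbf{v}_t$, so Gaussian max-entropy gives
\[ h(\mathbf{z}_t + \mathbf{n}'_t) \;\le\; \tfrac{1}{2}\log_2 \det\!\bigl(2\pi e (\mathbf{I} + \mathbf{C}_1\mathbf{\tilde{P}}\mathbf{C}_1^{\mathrm{T}})\bigr) \;=\; h(\mathbf{C}_1\mathbf{\tilde{r}}^{\mathrm{G}}_t + \mathbf{v}_t). \]
Expanding $I(\mathbf{z}_t; \mathbf{z}_t + \mathbf{n}'_t) = h(\mathbf{z}_t + \mathbf{n}'_t) - h(\mathbf{n}'_t)$ and adding/subtracting $h(\mathbf{v}_t)$ produces
\[ H(\mathbf{\tilde{z}}_t \mid \mathbf{d}_t) \le I(\mathbf{C}_1\mathbf{\tilde{r}}^{\mathrm{G}}_t;\,\mathbf{C}_1\mathbf{\tilde{r}}^{\mathrm{G}}_t + \mathbf{v}_t) + \bigl(h(\mathbf{v}_t) - h(\mathbf{n}'_t)\bigr). \]
Since $h(\mathbf{v}_t) = \tfrac{n}{2}\log_2(2\pi e)$ and $h(\mathbf{n}'_t) = n\log_2 \Delta = \tfrac{n}{2}\log_2 12$, the constant equals $\tfrac{n}{2}\log_2(2\pi e/12)$, which is bounded above by $\tfrac{n}{2}\log_2(4\pi e/12)$, recovering (\ref{eq:lemDithEq}).

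The main obstacle is justifying the dither identity of Step~1; the remainder is a one-line Gaussian max-entropy bound plus accounting for the uniform/Gaussian entropy gap. A minor caveat is that Lemma~\ref{lemma:secondMoments} provides the exact covariance identification only under matched initial conditions; otherwise the step should be read with $\mathbf{\tilde{P}}$ interpreted as the time-$t$ Riccati iterate, or in the $\limsup$ sense in $t$.
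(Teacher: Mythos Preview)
Your proof is correct and takes a somewhat different, more direct route than the paper. The paper's argument (in its appendix) invokes a cited inequality $H(\mathbf{\tilde{z}}_t\mid\mathbf{d}_t) \le \tfrac{n}{2}\log_2\tfrac{4\pi e}{12} + \mathcal{R}_{\mathbf{z}_t}(n)$ relating the conditional entropy to the squared-error rate--distortion function of $\mathbf{z}_t$, then bounds $\mathcal{R}_{\mathbf{z}_t}(n)$ above by the Gaussian rate--distortion function (``Gaussian is worst case'' for fixed covariance), and finally bounds that by $I(\mathbf{C}_1\mathbf{\tilde{r}}^{\mathrm{G}}_t;\mathbf{C}_1\mathbf{\tilde{r}}^{\mathrm{G}}_t+\mathbf{v}_t)$ since the additive Gaussian test channel with $\mathbf{V}=\mathbf{I}$ achieves distortion $n$. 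You instead start from the Zamir--Feder identity $H(\mathbf{\tilde{z}}_t\mid\mathbf{d}_t)=I(\mathbf{z}_t;\mathbf{z}_t+\mathbf{n}'_t)$ and apply Gaussian max-entropy directly to the channel output, bypassing the rate--distortion detour entirely. This is more elementary and in fact yields the sharper constant $\tfrac{n}{2}\log_2\tfrac{2\pi e}{12}$, which you then relax by $n/2$ bits to match the stated lemma. Both arguments rely on Lemma~\ref{lemma:secondMoments} for the second-moment identification between the Gaussian and non-Gaussian systems, so your caveat about initial conditions (read $\mathbf{\tilde{P}}_t$ rather than the limit $\mathbf{\tilde{P}}$) applies equally to the paper's own proof.
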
 Let $k = 2+ \frac{n}{2}\log_{2}\frac{4\pi e}{12}$. Our main result is the following. 
\begin{theorem}\label{th:MR}
When the entropy encoder and decoder in Fig. \ref{fig:quantizerImp} use SFE coding adapted to the PMF of $\mathbf{\tilde{z}}_{t}$ for all $t$, the architecture achieves \begin{multline}\label{eq:rdup}
     \underset{T\rightarrow \infty}{\lim\sup}\text{ }\frac{1}{T}\sum\limits_{t=1}^{T}\mathbb{E}[\ell(\mathbf{a}_{t})] \le \\ \underset{T\rightarrow \infty}{\lim\sup}\text{ }\frac{1}{T}  I(\mathbf{x}^{\mathrm{G}}_{1:T}\rightarrow\mathbf{y}^{\mathrm{G}}_{1:T}|| \mathbf{x}^{\mathrm{G},2}_{1:T}) + k.
 \end{multline}
\end{theorem}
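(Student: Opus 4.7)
The plan is to bound $\mathbb{E}[\ell(\mathbf{a}_t)]$ pointwise in $t$ by a term matching the Gaussian one-step mutual information in the DI decomposition of Section V-A, apply Lemma \ref{lem:ditherLemma} to close the gap in terms of the constant $\frac{n}{2}\log_2\frac{4\pi e}{12}$, and finally pass to the Ces\`aro average and take $\lim\sup$.

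First, observe that the dither $\mathbf{d}_t$ is available at both the encoder and decoder by construction. Hence at each time $t$ the encoder may apply a Shannon--Fano--Elias code tailored to the conditional PMF $\mathbb{P}(\mathbf{\tilde{z}}_t = \cdot \mid \mathbf{d}_t)$; averaging the standard SFE length bound over $\mathbf{d}_t$ gives $\mathbb{E}[\ell(\mathbf{a}_t)] \le H(\mathbf{\tilde{z}}_t \mid \mathbf{d}_t) + 2$. Substituting Lemma \ref{lem:ditherLemma} yields
\begin{equation*}
    \mathbb{E}[\ell(\mathbf{a}_t)] \;\le\; k \;+\; I\bigl(\mathbf{C}_1 \mathbf{\tilde{r}}^{\mathrm{G}}_t;\; \mathbf{C}_1 \mathbf{\tilde{r}}^{\mathrm{G}}_t + \mathbf{v}_t\bigr),
\end{equation*}
where $\mathbf{\tilde{r}}^{\mathrm{G}}_t$ is the side-information-updated prediction residual from the \emph{Gaussian} architecture of Fig. \ref{fig:conjecture} with the same $\mathbf{C}_1$ and $\mathbf{V}=\mathbf{I}$. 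Crucially, Lemma \ref{lemma:secondMoments} guarantees that the second-order statistics driving the right-hand side coincide in the quantized and Gaussian systems, so that $\mathbf{\tilde{r}}^{\mathrm{G}}_t$ is the correct object to reference.

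Next I would identify the residual mutual information with a single term in the Gaussian directed information. In the Gaussian architecture with $\mathbf{C}_2 = \mathbf{0}$, one has $\mathbf{y}^{\mathrm{G}}_t = \mathbf{C}_1 \mathbf{x}^{1,\mathrm{G}}_t + \mathbf{v}_t$, and conditioning on $\mathbf{x}^{\mathrm{G},2}_{1:t}, \mathbf{y}^{\mathrm{G}}_{1:t-1}$ writes $\mathbf{x}^{1,\mathrm{G}}_t = \mathbf{\tilde{x}}^{1,\mathrm{G}}_t + \mathbf{\tilde{r}}^{\mathrm{G}}_t$ with $\mathbf{\tilde{x}}^{1,\mathrm{G}}_t$ measurable with respect to the conditioning sigma-algebra and $\mathbf{\tilde{r}}^{\mathrm{G}}_t$ independent of it. Shift-invariance of mutual information and the Markov chain $\mathbf{x}^{\mathrm{G}}_{1:t-1} \leftrightarrow \mathbf{x}^{\mathrm{G},2}_{1:t}, \mathbf{x}^{1,\mathrm{G}}_t, \mathbf{y}^{\mathrm{G}}_{1:t-1} \leftrightarrow \mathbf{y}^{\mathrm{G}}_t$ (already used in (\ref{eq:mi_x_y})) then give
\begin{equation*}
    I\bigl(\mathbf{C}_1 \mathbf{\tilde{r}}^{\mathrm{G}}_t;\; \mathbf{C}_1 \mathbf{\tilde{r}}^{\mathrm{G}}_t + \mathbf{v}_t\bigr)
    \;=\; I\bigl(\mathbf{x}^{\mathrm{G}}_{1:t};\, \mathbf{y}^{\mathrm{G}}_t \,\big|\, \mathbf{x}^{\mathrm{G},2}_{1:t}, \mathbf{y}^{\mathrm{G}}_{1:t-1}\bigr).
\end{equation*}

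Finally, summing the pointwise bound over $t=1,\dots,T$, dividing by $T$, and invoking the definition (\ref{eq:DIDEF}) of causally conditioned directed information produces $\frac{1}{T}\sum_{t=1}^T \mathbb{E}[\ell(\mathbf{a}_t)] \le k + \frac{1}{T} I(\mathbf{x}^{\mathrm{G}}_{1:T} \rightarrow \mathbf{y}^{\mathrm{G}}_{1:T} \| \mathbf{x}^{\mathrm{G},2}_{1:T})$; taking $\lim\sup_{T\to\infty}$ on both sides delivers (\ref{eq:rdup}). The main obstacle I anticipate is the careful justification of the MI identity above: it is conceptually straightforward but hinges on correctly invoking the conditional independence structure of the Gaussian architecture, i.e.\ that $\mathbf{v}_t$ is independent of the entire past $(\mathbf{x}^{\mathrm{G}}_{1:t}, \mathbf{x}^{\mathrm{G},2}_{1:t}, \mathbf{y}^{\mathrm{G}}_{1:t-1})$ and that the conditional mean $\mathbf{\tilde{x}}^{1,\mathrm{G}}_t$ can be subtracted without changing the mutual information. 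Everything else is bookkeeping, and no additional limiting argument is required beyond the existence of the $\lim\sup$ on the right-hand side, which has already been established through the Riccati convergence discussion culminating in (\ref{eq:rateSimp}).
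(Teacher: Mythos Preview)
Your proposal is correct and follows essentially the same approach as the paper: SFE length bound, Lemma~\ref{lem:ditherLemma}, identification of $I(\mathbf{C}_1\mathbf{\tilde r}^{\mathrm G}_t;\mathbf{C}_1\mathbf{\tilde r}^{\mathrm G}_t+\mathbf v_t)$ with the $t$-th summand of the Gaussian directed information, then sum and take $\limsup$. The one place where the paper is more explicit is the passage from $I(\mathbf{\tilde r}^{\mathrm G}_t;\mathbf{C}_1\mathbf{\tilde r}^{\mathrm G}_t+\mathbf v_t)$ to $I(\mathbf{C}_1\mathbf{\tilde r}^{\mathrm G}_t;\mathbf{C}_1\mathbf{\tilde r}^{\mathrm G}_t+\mathbf v_t)$: your ``shift-invariance'' argument handles subtracting $\mathbf{\tilde x}^{1,\mathrm G}_t$ but not applying $\mathbf C_1$, and the paper closes this via the two Markov chains $\mathbf{\tilde r}^{\mathrm G}_t\leftrightarrow\mathbf{C}_1\mathbf{\tilde r}^{\mathrm G}_t\leftrightarrow\mathbf{C}_1\mathbf{\tilde r}^{\mathrm G}_t+\mathbf v_t$ and $\mathbf{C}_1\mathbf{\tilde r}^{\mathrm G}_t\leftrightarrow\mathbf{\tilde r}^{\mathrm G}_t\leftrightarrow\mathbf{C}_1\mathbf{\tilde r}^{\mathrm G}_t+\mathbf v_t$ and data processing applied in both directions; since only the inequality $\le$ is needed for (\ref{eq:rdup}), a single application of data processing already suffices, so this is not a genuine gap.
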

\begin{proof}
At every time $t$, SFE codeword has a length $\mathbb{E}[\ell(\mathbf{a}_{t})]\le 1+H(\mathbf{\tilde{z}}_{t}|\mathbf{d}_{t})$.
Since $\mathbf{\tilde{x}}^{1,\mathrm{G}}_{t}$ is a measurable function of $\mathbf{x}^{\mathrm{G},2}_{1:t}, \mathbf{y}^{\mathrm{G}}_{1:t-1}$ we have that  \begin{multline}
     I(\mathbf{x}^{1,\mathrm{G}}_{t};\mathbf{y}^{\mathrm{G}}_{t} | \mathbf{x}^{2,\mathrm{G}}_{1:t}, \mathbf{y}^{\mathrm{G}}_{1:t-1})=\\I(\mathbf{\tilde{r}}^{\mathrm{G}}_{t};\mathbf{C}_{1}\mathbf{\tilde{r}}^{\mathrm{G}}_{t}+\mathbf{v}_{t}
     | \mathbf{x}^{2,\mathrm{G}}_{1:t}, \mathbf{y}^{\mathrm{G}}_{1:t-1}).
\end{multline}  Since  $\mathbf{\tilde{r}}^{\mathrm{G}}_{t}$ and $\mathbf{v}_{t}$ are independent of  $(\mathbf{x}^{2,\mathrm{G}}_{1:t}, \mathbf{y}^{\mathrm{G}}_{1:t-1})$, this implies 
\begin{align}
    I(\mathbf{x}^{1,\mathrm{G}}_{t};\mathbf{y}^{\mathrm{G}}_{t} | \mathbf{x}^{2,\mathrm{G}}_{1:t}, \mathbf{y}^{\mathrm{G}}_{1:t-1}) = I(\mathbf{\tilde{r}}^{\mathrm{G}}_{t};\mathbf{C}_{1}\mathbf{\tilde{r}}^{\mathrm{G}}_{t}+\mathbf{v}_{t})
\end{align} Note that both $\mathbf{\tilde{r}}^{\mathrm{G}}_{t}\leftrightarrow \mathbf{C}_{1}\mathbf{\tilde{r}}^{\mathrm{G}}_{t} \leftrightarrow \mathbf{C}_{1}\mathbf{\tilde{r}}^{\mathrm{G}}_{t}+\mathbf{v}_{t}$ and  $\mathbf{C}_{1}\mathbf{\tilde{r}}^{\mathrm{G}}_{t} \leftrightarrow \mathbf{\tilde{r}}^{\mathrm{G}}_{t}\leftrightarrow \mathbf{C}_{1}\mathbf{\tilde{r}}^{\mathrm{G}}_{t}+\mathbf{v}_{t}$ are Markov chains. Applying the standard data processing inequality (twice) to $I(\mathbf{\tilde{r}}^{\mathrm{G}}_{t};\mathbf{C}_{1}\mathbf{\tilde{r}}^{\mathrm{G}}_{t}+\mathbf{v}_{t})$ using both of these chains allows us to conclude that
\begin{align}
    I(\mathbf{x}^{1,\mathrm{G}}_{t};\mathbf{y}^{\mathrm{G}}_{t} | \mathbf{x}^{2,\mathrm{G}}_{1:t}, \mathbf{y}^{\mathrm{G}}_{1:t-1}) = I(\mathbf{C}_{1}\mathbf{\tilde{r}}^{\mathrm{G}}_{t};\mathbf{C}_{1}\mathbf{\tilde{r}}^{\mathrm{G}}_{t}+\mathbf{v}_{t}).
\end{align} Thus, by Lemma \ref{lem:ditherLemma} 
\begin{align}\label{eq:entropyBoundFinal}
H(\mathbf{\tilde{z}}_{t}|\mathbf{d}_{t}) \le \frac{n}{2}\log_{2}\frac{4\pi e}{12}+  I(\mathbf{x}^{1,\mathrm{G}}_{t};\mathbf{y}^{\mathrm{G}}_{t} | \mathbf{x}^{2,\mathrm{G}}_{1:t}, \mathbf{y}^{\mathrm{G}}_{1:t-1}).
\end{align} Summing (\ref{eq:lemDithEq}) over $t$, and applying  (\ref{eq:DIDEF}) gives (\ref{eq:rdup}).
\end{proof} The Ces\'{a}ro mean argument in (\ref{eq:rateSimp})  applied to (\ref{eq:rdup}) gives
\begin{align}
     \underset{T\rightarrow \infty}{\lim\sup}\text{ }\frac{1}{T}\sum\limits_{t=1}^{T}\mathbb{E}[\ell(\mathbf{a}_{t})] \le k+ \frac{\log\det\mathbf{\tilde{P}}_{\min}-\log\det\mathbf{\hat{P}}_{\min}}{2}\nonumber,
\end{align} which is convenient for computing the bound via (\ref{eq:convexProgram}). 
\bibliographystyle{IEEEtran}
\bibliography{IEEEabrv,control_info_theory}
\clearpage
\appendices
\section{A formal proof of the converse bound}\label{app:SIEO}
\begin{figure}[]
	\centering
	\includegraphics[scale = .31]{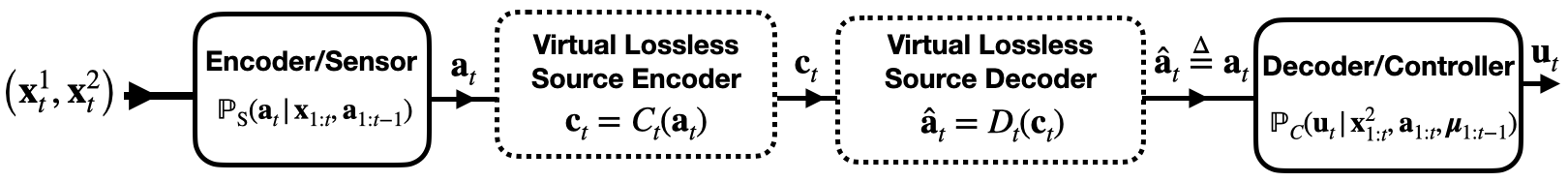}
    \vspace{-.3cm}
	\caption{A generalized version of the path from encoder to decoder in the model of Fig. \ref{fig:flowchart}. The minimum achievable rate for a system with the additional ``virtual" encoder/decoder pair lower bounds the minimum achievable rate for a system without the additional virtual pair.}
	\label{fig:sie}
\end{figure}
Consider the system model in Fig. \ref{fig:flowchart} at time $t$. By assumption (cf. Sec. \ref{sec:formulation}), $\mathbf{a}_{t}$ is a prefix-free codeword, so if $\mathsf{a}_{1}\neq\mathsf{a}_{2}$, $\mathbb{P}(\mathbf{a}_{t}=\mathsf{a}_{1})>0$ and $\mathbb{P}(\mathbf{a}_{t}=\mathsf{a}_{2})>0$ then $\mathsf{a}_{1}$ is not a prefix of $\mathsf{a}_{2}$ and vice-versa. The codeword $\mathbf{a}_{t}$ is a discrete variable with countable support chosen by the policy defined by kernel (a conditional PMF)
\begin{align}
\mathbb{P}_{S}(\mathbf{a}_{t}|\mathbf{x}^1_{1:t},\mathbf{x}^2_{1:t},\mathbf{a}_{1:t-1}),
\end{align} where we added the subscript $S$ to denote the ``encoder/sensor" policy as in Fig \ref{fig:flowchart}. 
The control action is is chosen by the policy defined via the probability measure 
\begin{align}
    \mathbb{P}_{C}(\mathbf{u}_{t}|\mathbf{a}_{1:t},\mathbf{x}^2_{1:t},\mathbf{u}_{1:t-1}),
\end{align} where we added the subscript $C$ to denote the ``decoder/controller" policy as in Fig \ref{fig:flowchart}. 

We bound the expected length of the prefix-free codeword $\mathbf{a}_{t}$ at every $t$ by bounding the length of  lossless prefix-free source code that encodes $\mathbf{a}_{t}$ itself.  Consider a  modified version of the system model (cf. Fig. \ref{fig:flowchart}) shown in Fig. \ref{fig:sie}. Another, ``virtual"  encoder/decoder pair has been added between the encoder/sensor and decoder/controller. We assume that at every time $t$, the virtual encoder encodes $\mathbf{a}_{t}$ into a prefix-free codeword $\mathbf{c}_{t}$. We refer to $\mathbf{a}_{t}$ as a ``source codeword" and $\mathbf{c}_{t}$ as a ``virtual codeword". At every timestep $t$, the virtual encoder encodes $\mathbf{a}_{t}$ into the virtual codeword $\mathbf{c}_{t}$ via computing
\begin{align}
    \mathbf{c}_{t} = C_{t}(\mathbf{a}_{t}),
\end{align} for some deterministic measurable function $C_{t}$. 
Likewise, the virtual encoder computes the reconstruction $\mathbf{\hat{a}}_{t}$ by computing 
\begin{align}
    \mathbf{\hat{a}}_{t} = D_{t}( \mathbf{c}_{t}),
\end{align}  where again, for all $t$, ${D}_{t}$ is a deterministic measurable function. The virtual encoder and decoder are both memoryless and do not access any SI. We insist that the virtual encoder and decoder are lossless, namely that $\mathbf{a}_{t}\overset{a.s.}{=}
\mathbf{\hat{a}}_{t}$. We think of Fig. \ref{fig:sie} as ``inserting" the virtual encoder and decoder blocks into Fig. \ref{fig:flowchart}. Note that virtual encoder and decoder policies do not effect the measure induced on the random variables $\mathbf{x}_{1:t},\mathbf{a}_{1:t},\mathbf{u}_{1:t}$ due to the assumption that $C_{t}$ and $D_{t}$ are deterministic and that $\mathbf{\hat{a}}_{t}\overset{a.s.}{=}\mathbf{a}_{t}$. 

The idea is that the insertion of an optimal ``virtual" lossless encoder/decoder between the sensor and controller produces a codeword $c_{t}$ that has a length less than or equal to that of $\mathbf{a}_{t}$. More formally, at every time $t$ we lower bound the length of the codeword $\mathbf{a}_{t}$ by lower bounding the length of the codeword $\mathbf{c}_{t}$ under the optimal zero-error prefix free virtual encoder and virtual decoder policies. If $\mathbf{r}$ is a prefix free binary codeword, let $\ell(\mathbf{r})$ denote its length. An ``optimal" virtual encoder and decoder policy (there may be more than one), for some fixed sensor and controller policy, is defined as a sequence of deterministic functions $P^* = \{ {C}^*_{t}, {D}^*_{t} \}$ where:
\begin{enumerate}
    \item ${C}^*_{t}$ maps $\mathbf{a}_{t}$ to $\mathbf{c}_{t}$ and ${D}^*_{t}$ maps $\mathbf{c}_{t}$ to $\mathbf{\hat{a}}_{t}$ as in Fig. \ref{fig:sie}.
    \item There is no probability of error, e.g. $D^{*}_{t}(C^{*}_{t}(\mathbf{a}_{t}))\overset{a.s.}{=}\mathbf{a}_{t}$ for all $t$.
    \item  Let $\mathcal{A}_{t} = \{\mathsf{a}\in \{0,1\}^* : \mathbb{P}(\mathbf{a}_{t}=\mathsf{a})>0\}$. At every time $t$,  if $\mathsf{a}_{1},\mathsf{a}_{2}\in \mathcal{A}_{t}$ and $\mathsf{a}_{1}\neq\mathsf{a}_{2}$ then $\mathsf{c}_{1}=C^*_{t}(\mathsf{a}_{1})$ is not a prefix of $\mathsf{c}_{2} = C^*_{t}(\mathsf{a}_{2})$ and vice-versa.\footnote{This ensures that $C^*_{t}$, restricted to $\mathcal{A}_{t}$ is injective; a necessary condition for there to exist a deterministic $D^*_{t}$ such that $D^*_{t}(C^*_{t}(\mathbf{a}_{t}))\overset{a.s.}{=}\mathbf{a}_{t}$ (cf. condition (2)). It also ensures that the set of virtual codewords transmitted with nonzero probability at time $t$ are not prefixes of one another; define $\mathcal{C}^*_{t} =\{\mathbf{c}\in\{0,1\}^*: \mathbb{P}(C_{t}(\mathbf{a}_{t})=\mathbf{c})>0\}$, and let  $\mathsf{c}_{1},\mathsf{c}_{2}\in\mathcal{C}_{t}$. If $\mathsf{c}_{1}\neq\mathsf{c}_{2}$, then $\mathsf{c}_{1}$ is not a prefix of  $\mathsf{c}_{2}$ and vice-versa.  }
    \item At all $t$, $\mathbb{E}[\ell(\mathbf{c}_{t})]$ is minimized among all other policies satisfying (1), (2), and (3) above.
\end{enumerate} The above expectations and probabilities are taken with respect to the measure induced by the sensor and controller policies $\mathbb{P}_{S}$ and $\mathbb{P}_{C}$ and Fig. \ref{fig:flowchart}. In the language of \cite[Chapter 5]{elemIT}, the constraints (1), (2), and (3) require that at any time $t$,  $C^*_{t}$ is prefix-free code mapping the space $\mathcal{A}_{t}\subset \{0,1\}^*$ to the space of binary prefix-free codewords in $\{0,1\}^{*}$. In the next lemma, we show that under the optimal virtual encoder and decoder policies, we have 
\begin{subequations}\label{eq:virtualHelps}
\begin{align}
    \mathbb{E}[\ell(C^*_{t}(\mathbf{a}_{t}))] \le  \mathbb{E}[\ell(\mathbf{a}_{t})],
\end{align} or, in other words
\begin{align}
    \mathbb{E}[\ell(\mathbf{c}_{t})] \le  \mathbb{E}[\ell(\mathbf{a}_{t})],
\end{align}
\end{subequations}
where $\mathbf{c}_{t}=C^*_{t}(\mathbf{a}_{t})$.
\begin{lemma}
For all $t$ there exists a $C_{t}$ and $D_{t}$ satisfying (1)-(4) above and $\mathbb{E}[\ell(\mathbf{c}_{t})]\le \mathbb{E}[\ell(\mathbf{a}_{t})]$.
\end{lemma}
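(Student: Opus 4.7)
The plan is to exhibit an explicit pair $(C_t, D_t)$ satisfying conditions (1)--(3) whose expected codeword length equals $\mathbb{E}[\ell(\mathbf{a}_t)]$; optimality (condition (4)) of $(C_t^*, D_t^*)$ then immediately gives $\mathbb{E}[\ell(\mathbf{c}_t)] = \mathbb{E}[\ell(C_t^*(\mathbf{a}_t))] \le \mathbb{E}[\ell(\mathbf{a}_t)]$. The canonical candidate is the identity: define $C_t(\mathsf{a}) = \mathsf{a}$ for every $\mathsf{a} \in \{0,1\}^*$ and likewise $D_t(\mathsf{c}) = \mathsf{c}$.

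The verification is straightforward. Condition (1) holds by construction since $C_t$ and $D_t$ are deterministic measurable maps on $\{0,1\}^*$. Condition (2) is immediate: $D_t(C_t(\mathbf{a}_t)) = \mathbf{a}_t$ everywhere, hence almost surely. Condition (3) is where we invoke the prefix-free assumption stated at the beginning of the appendix (and inherited from Sec.~\ref{sec:formulation}): for any $\mathsf{a}_1, \mathsf{a}_2 \in \mathcal{A}_t$ with $\mathsf{a}_1 \ne \mathsf{a}_2$, the source codeword assumption guarantees that $\mathsf{a}_1$ is not a prefix of $\mathsf{a}_2$ and vice versa; since $C_t(\mathsf{a}_i) = \mathsf{a}_i$, the same is true of the virtual codewords.

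Now by condition (4), any optimal $(C_t^*, D_t^*)$ minimizes $\mathbb{E}[\ell(\mathbf{c}_t)]$ among all pairs satisfying (1)--(3). Since the identity pair is a feasible candidate and yields $\mathbb{E}[\ell(C_t(\mathbf{a}_t))] = \mathbb{E}[\ell(\mathbf{a}_t)]$, we conclude
\begin{align*}
\mathbb{E}[\ell(\mathbf{c}_t)] \;=\; \mathbb{E}[\ell(C_t^*(\mathbf{a}_t))] \;\le\; \mathbb{E}[\ell(\mathbf{a}_t)].
\end{align*}

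The only mild subtlety, which I would address in a sentence, is to confirm that an optimal pair actually exists so that condition (4) is well defined; this follows because the infimum of $\mathbb{E}[\ell(\cdot)]$ over prefix-free codes for the discrete random variable $\mathbf{a}_t$ is achieved (e.g., by a Huffman code on $\mathcal{A}_t$), and any such minimizer may be extended arbitrarily to $\{0,1\}^* \setminus \mathcal{A}_t$ while preserving (1)--(3). I do not anticipate a real obstacle here: the lemma is essentially the observation that inserting an \emph{optimal} prefix-free recoding of an already prefix-free message can only shorten the expected description length.
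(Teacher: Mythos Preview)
Your proposal is correct and follows essentially the same approach as the paper: both exhibit the identity map as a feasible pair satisfying (1)--(3) with $\mathbb{E}[\ell(C_t(\mathbf{a}_t))]=\mathbb{E}[\ell(\mathbf{a}_t)]$, whence the optimal pair (satisfying (4)) can only do at least as well. You are slightly more careful than the paper in explicitly justifying the existence of the optimal pair via Huffman coding, whereas the paper leaves this implicit.
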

\begin{proof}
At every time $t$, the source codewords $\mathbf{a}_{t}$ are codewords of a prefix-free code. Setting $C_{t}$ and $D_{t}$ equal to identity, e.g.
\begin{align}\label{eq:app_virtual_encoder}
\mathbf{a}_{t}=C_{t}(\mathbf{a}_{t})&\text{ and } \mathbf{c}_{t}=C_{t}(\mathbf{c}_{t}),
\end{align} gives
\begin{align}
\mathbf{c}_{t}=\mathbf{a}_{t}&\text{ and }\mathbf{\hat{a}}_{t}=\mathbf{c}_{t}
\end{align} 
Under this policy, the virtual encoder sends the input $\mathbf{a}_{t}$ directly and $\mathbf{a}_{t}=\mathbf{c}_{t} =\mathbf{\hat{a}}_{t}$. Thus there exist policies satisfying the constraints that achieve equality in (\ref{eq:virtualHelps}) and the result follows. 
\end{proof} 
Since for every $t$, $C^*_{t}$ is a prefix-free code from $\mathcal{A}_{t}\rightarrow \{0,1\}^{*}$, it follows from \cite[Theorem 5.3.1]{elemIT} that 
\begin{align}
H(\mathbf{a}_{t})\le\mathbb{E}[\ell(C^*_{t}(\mathbf{a}_{t}))],
\end{align} which gives the result
\begin{align}\label{eq:app_conv_conc}
H(\mathbf{a}_{t})\le\mathbb{E}[\ell(\mathbf{a}_{t})].
\end{align} We emphasize that (\ref{eq:app_conv_conc}) holds for every time $t$. 
\section{Proof of equivalence between (\ref{eq:finiteDimensional}) and (\ref{eq:convexProgram})}\label{app:convexification}
We begin by writing (\ref{eq:finiteDimensional}) in terms of $\mathbf{\hat{P}}$ only. It can immediately be seen that the design variables $\mathbf{C}$ and $\mathbf{V}$ are essentially slack. The constraint $\mathbf{\hat{P}}^{-1} = \mathbf{\tilde{P}}^{-1}+\mathbf{C}_{1}^{\mathrm{T}}\mathbf{V}^{-1}\mathbf{C}_{1}$ can be replaced with the constraints $\mathbf{\tilde{P}}-\mathbf{\hat{P}}\succeq \mathbf{0}$ and $\mathbf{\hat{P}}\succ \mathbf{0}$. The new inequality constraint may be readily combined with the equality constraint for $\mathbf{\tilde{P}}$ (cf. (\ref{eq:hat2tilde})) to derive a linear matrix inequality (LMI) in $\mathbf{\hat{P}}$. Applying the matrix inversion lemma to (\ref{eq:hat2tilde}) gives
\begin{multline}\label{eq:almostLMI}
    \mathbf{\tilde{P}}-\mathbf{\hat{P}} =  \mathbf{W}_{11} -\mathbf{\hat{P}}+ \\\mathbf{A}_{11}\left(    \mathbf{\hat{P}}-\mathbf{\hat{P}}\mathbf{A}_{21}^{\mathrm{T}}(\mathbf{A}_{21}\mathbf{\hat{P}}\mathbf{A}_{21}^{\mathrm{T}}+\mathbf{W}_{22})^{-1}\mathbf{A}_{21}\mathbf{\hat{P}}   \right)\mathbf{A}_{11}^{\mathrm{T}}.
\end{multline}  The right hand side of (\ref{eq:almostLMI}) is a Shur complement, and the LMI constraint follows directly. Thus $\mathbf{\tilde{P}}-\mathbf{\hat{P}}\succeq 0$ is equivalent to the LMI
\begin{align}\label{eq:LMI}
    \mathbf{W}+\mathbf{\bar{A}}\mathbf{\hat{P}}\mathbf{\bar{A}}^{\mathrm{T}} -\begin{bmatrix}\mathbf{\hat{P}} & 0\\ 0 & 0\end{bmatrix} \succeq 0.
\end{align} 

The corresponding $\mathbf{C}_{1}$ and $\mathbf{V}$ are not unique, and can be found by factorizing $\mathbf{\hat{P}}^{-1}-\mathbf{\tilde{P}}^{-1}$.

It remains to simplify the rate cost. 
Using (\ref{eq:hat2tilde}) and invoking the matrix determinant lemma twice, we 
\begin{multline}\label{eq:almostFinalSSCost}
    \log\det\mathbf{\tilde{P}} -\log\det\mathbf{\hat{P}} =  \log\det(\mathbf{\hat{P}}^{-1}+\mathbf{\bar{A}}^{\mathrm{T}}\mathbf{W}^{-1}\mathbf{\bar{A}})+\\\log\det\mathbf{W}-\log\det( \mathbf{W}_{22}+\mathbf{A}_{21}\mathbf{\hat{P}}\mathbf{A}_{21}^{\mathrm{T}}).
\end{multline} 
Introduce the slack variable $\boldsymbol{\Pi}$. We have 
\begin{multline}\label{eq:determinentEquivalenceSS}
    \log\det(\mathbf{\hat{P}}^{-1}+\mathbf{\bar{A}}^{\mathrm{T}}\mathbf{W}^{-1}\mathbf{\bar{A}}) =\\ \min_{0\preceq \boldsymbol{\Pi}\preceq  (\mathbf{\hat{P}}^{-1}+\mathbf{\bar{A}}^{\mathrm{T}}\mathbf{W}^{-1}\mathbf{\bar{A}})^{-1}}-\logdet\boldsymbol{\Pi}.
\end{multline} 
Applying the matrix inversion lemma and the Shur complement formula to the constraint $\boldsymbol{\Pi}\preceq  (\mathbf{\hat{P}}^{-1}+\mathbf{\bar{A}}^{\mathrm{T}}\mathbf{W}^{-1}\mathbf{\bar{A}})^{-1}$ gives the equivalent LMI 
\begin{align}
 \begin{bmatrix}
\mathbf{\hat{P}}-\boldsymbol{\Pi} & \mathbf{\hat{P}}\mathbf{\bar{A}}^{\mathrm{T}} \\ \mathbf{\bar{A}}\mathbf{\hat{P}} & \mathbf{W}+\mathbf{\bar{A}}\mathbf{\hat{P}}\mathbf{\bar{A}}^{\mathrm{T}}
\end{bmatrix}\succeq 0. 
\end{align} 

The preceding discussion demonstrates that  
\begin{mini}|s|
{\mathbf{\hat{P}},\boldsymbol{\Pi}}{\frac{\log\det\mathbf{W}-\log\det\boldsymbol{\Pi}-\log\det{
( \mathbf{W}_{22}+\mathbf{A}_{21}\mathbf{\hat{P}}\mathbf{A}_{21}^{\mathrm{T}})}}{2} }
{\label{eq:app_convexProgram}}{}
\addConstraint{\mathbf{\hat{P}}\succ {0}, \boldsymbol{\Pi}\succeq 0, \text{Tr}(\mathbf{\Theta}\mathbf{\hat{P}})+\text{Tr}(\mathbf{S}\mathbf{W}) \le \gamma}
\addConstraint{  \mathbf{W}+\mathbf{\bar{A}}\mathbf{\hat{P}}\mathbf{\bar{A}}^{\mathrm{T}} -\begin{bmatrix}\mathbf{\hat{P}} & 0\\ 0 & 0\end{bmatrix} \succeq 0 }{}
\addConstraint{ \begin{bmatrix}
\mathbf{\hat{P}}-\boldsymbol{\Pi} & \mathbf{\hat{P}}\mathbf{\bar{A}}^{\mathrm{T}} \\ \mathbf{\bar{A}}\mathbf{\hat{P}} & \mathbf{W}+\mathbf{\bar{A}}\mathbf{\hat{P}}\mathbf{\bar{A}}^{\mathrm{T}}
\end{bmatrix}\succeq 0}.
\end{mini}
achieves the same minimum as (\ref{eq:finiteDimensional}). This program is the minimization of a convex objective with convex constraints. 

\section{Proof of Lemma \ref{lem:ditherLemma}}\label{app:ditherLemma} The proof follows closely from \cite{tanakaISIT}. Assume the definitions of Sec. \ref{sec:achieve}. It turns out that $H(\mathbf{\tilde{z}}_{t}|\mathbf{d}_{t} )$
admits a bound in terms of the squared error rate distortion function of $\mathbf{z}$ \cite[Lemma 1 c-d]{tanakaISIT}. Let $D={n\Delta^2/12}=n$. Define the rate distortion function
\begin{align}
    \mathcal{R}_{\mathbf{x}}(D)= \underset{\mathbb{P}(\mathbf{u}|\mathbf{x}):\mathbb{E}[\lVert\mathbf{x}-\mathbf{u}\rVert_{2}^2]\le D}{\inf} I(\mathbf{x};\mathbf{u}).
\end{align}
 We have 
\begin{align}\label{eq:boundCW}
  H(\mathbf{\tilde{z}}_{t}|\mathbf{d}_{t}) \le \frac{n}{2}\log_{2}\frac{4\pi e}{12}+\mathcal{R}_{\mathbf{z}_{t}}(D).
\end{align} It is known (cf. \cite[Problem 10.8]{elemIT}) that if a Gaussian random vector $\mathbf{x}$ has $\text{cov}(\mathbf{x})=\text{cov}(\mathbf{z})$ then  $\mathcal{R}_{\mathbf{z}}(D)\le \mathcal{R}_{\mathbf{{x}}}(D)$. 

We claim that
 \begin{align}\label{eq:rdbound}
     \mathcal{R}_{\mathbf{z}_{t}}(D) \le I(\mathbf{C}_{1}\mathbf{\tilde{r}}^{\mathrm{G}}_{t};\mathbf{C}_{1}\mathbf{\tilde{r}}^{\mathrm{G}}_{t}+\mathbf{v}_{t})
 \end{align} Note that $\boldsymbol{\Theta}_{t}=\mathbf{C}_{1}\mathbf{\tilde{r}}^{\mathrm{G}}_{t}$ is Gaussian and that, by Lemma \ref{lemma:secondMoments} we have $\mathbb{E}[\boldsymbol{\Theta}_{t}\boldsymbol{\Theta}_{t}^{\mathrm{T}}] = \mathbb{E}[\mathbf{z}_{t}\mathbf{z}_{t}^{\mathrm{T}}]=\mathbf{C}_{1}\mathbf{\tilde{P}}_{t}\mathbf{C}_{1}^{\mathrm{T}}$ and that  $\mathbb{E}[\boldsymbol{\Theta}_{t}] =\mathbb{E}[\mathbf{z}_{t}] =  \mathbf{0}$. Thus  $\mathcal{R}_{\mathbf{z}_{t}}(D) \le \mathcal{R}_{\boldsymbol{\Theta}_{t}}(D)$.  By the assumption that $\mathbf{V}=\mathbf{I}$, we have $\mathbb{E}[\mathbf{v}^{\mathrm{T}}_{t}\mathbf{v}_{t}]=n$. Since $D=n$, $\mathcal{R}_{\boldsymbol{\Theta}_{t}}(D)\le I(\mathbf{C}_{1}\mathbf{\tilde{r}}^{\mathrm{G}}_{t};\mathbf{C}_{1}\mathbf{\tilde{r}}^{\mathrm{G}}_{t}+\mathbf{v}_{t})$, and (\ref{eq:rdbound}) follows. Substituting this into this into (\ref{eq:boundCW}) establishes the Lemma.



\end{document}